\documentclass[runningheads,envcountsame,envcountsect]{llncs}

\usepackage{graphicx}
\usepackage{algorithmic}
\usepackage{algorithm}
\usepackage{amsmath}
\usepackage{amssymb}
\usepackage{amsfonts}
\usepackage{amsxtra}
\usepackage{enumitem}
\usepackage{newtxtext}
\usepackage{newtxmath}
\usepackage{mathtools}
\usepackage{etoolbox}
\usepackage{booktabs}
\usepackage{adjustbox}
\usepackage{siunitx}
\usepackage{comment}

\usepackage[round]{natbib}

\spnewtheorem{fact}[theorem]{Fact}{\bfseries}{\itshape}
\numberwithin{theorem}{section}
\numberwithin{equation}{section}

\makeatletter
\let\c@algorithm\c@theorem

\makeatother

\usepackage[svgnames]{xcolor}
\definecolor{darkgreen}{rgb}{0,.35,0}
\definecolor{darkblue}{rgb}{0,0,.1}
\definecolor{darkred}{rgb}{.6,0,0}

\makeatletter
\newcommand{\papertitle}{\@title}
\newcommand{\paperauthor}{\@author}
\makeatother

\makeatletter
\newcommand{\ReDeclareMathOperator}{%
  \@ifstar{\@declmathop\@empty}{\@declmathop o}}
\long\def\@declmathop#1#2#3{%
  \DeclareRobustCommand{#2}{\qopname\newmcodes@#1{#3}}}
\makeatother

\newcommand{\QQ}{\mathbb{Q}}
\DeclareRobustCommand{\QQbar}{%
  \mathord{%
    \ooalign{%
      $\mathbb{Q}$\cr
      \hidewidth\raisebox{1.7ex}{\rule{0.6em}{0.6pt}}\hidewidth\cr
    }%
  }%
}

\DeclareTextFontCommand{\boldemph}{\bfseries\em}

\newcommand{\CC}{\mathbb{C}}
\newcommand{\C}{\mathsf{C}}
\newcommand{\FF}{\mathbb{F}}
\newcommand{\K}{\mathsf{K}}
\newcommand{\F}{\mathsf{F}}
\newcommand{\E}{\mathsf{E}}
\newcommand{\Lgen}{\mathsf{L}_{\textrm{gen}}}
\newcommand{\Agen}{\mathsf{A}_{\textrm{gen}}}
\newcommand{\R}{\mathsf{R}}

\newcommand{\ZZ}{\mathbb{Z}}
\newcommand{\OK}{\mathcal{O}_\K}

\newcommand{\OLa}{\mathcal{O}_{\L_a}}
\newcommand{\Tau}{\mathcal{T}}
\newcommand{\Chi}{\mathcal{X}}

\newcommand{\gcrd}{\operatorname{gcrd}}

\newcommand{\rrem}{\operatorname{rrem}}

\newcommand{\ind}{\operatorname{ind}}
\newcommand{\Br}{\operatorname{Br}}

\newcommand{\disc}{\operatorname{disc}}
\newcommand{\Norm}{\operatorname{Norm}}
\newcommand{\Gal}{\operatorname{Gal}}
\newcommand{\q}{{\mathfrak{q}}}
\newcommand{\p}{{\mathfrak{p}}}

\newcommand{\Ralg}{{\widetilde\R}}
\newcommand{\Calg}{{\widetilde\C}}
\newcommand{\Falg}{{\widetilde\F}}

\newcommand{\tightdisplay}[1]{%
  \begingroup
  \setlength{\abovedisplayskip}{4pt}%
  \setlength{\belowdisplayskip}{4pt}%
  \setlength{\abovedisplayshortskip}{2pt}%
  \setlength{\belowdisplayshortskip}{2pt}%
  \[
  #1
  \]
  \endgroup\ignorespaces
}

\DeclareMathOperator{\sp}{sp}

\usepackage{hyperref}

\hypersetup{
  colorlinks=true,
  linkcolor=blue!50!black,
  citecolor=blue!50!black,
  urlcolor=blue!60!black,
  pdftitle={On Factoring Quantum-Plane Skew Polynomials over \texorpdfstring{$\QQ(\omega)(t)$}{Q(omega)(t)}},
  pdfauthor={Mark Giesbrecht}
}


\let\L\relax
\newcommand{\L}{\mathsf{L}}

\begin{document}

\title{On Factoring Quantum-Plane Skew Polynomials
  over~\texorpdfstring{$\QQ(\omega)(t)$}{Q(omega)(t)}}
\author{Mark Giesbrecht}
\institute{Cheriton School of Computer Science, Faculty of Mathematics, University of Waterloo, Canada
\email{mwg@uwaterloo.ca}}
\date{\today}

\maketitle

\begin{abstract}
  We study algorithms for factorization in the quantum plane of
  (dilation) skew polynomials over a function field of a cyclotomic
  field:
  \tightdisplay{
    \R=\K(t)[x;\sigma], \qquad \K=\QQ(\omega),
    \qquad \sigma(t)=\omega t,
  }
  where $\omega\in\CC$ is a primitive $m$-th root of unity. We start
  with the established approach through central elements and factor
  the central left multiples, staying in characteristic zero, to
  obtain a partial decomposition.  A two-level modular approach is
  proposed: specialize a central parameter to good algebraic values,
  study the resulting cyclic algebras over number fields, and then
  reduce further at good inert primes so that fast finite-field
  skew-factorization algorithms apply.  A prototype SageMath
  implementation is provided to experiment with the algorithms.  We
  then look at the effect of extending the field of constants from
  $\QQ(\omega)$ to $\QQbar$, an algebraic closure of $\QQ$, and
  factoring over $\QQbar(t)[x;\sigma]$. In this case we show
  factorization is decidable in the exact algebraic model based on
  finite extensions. 
\end{abstract}

\footnotetext{To appear,
  \href{https://www.casc-conference.org/index.html}{Computer Algebra
    in Scientific Computation (CASC)} conference, August
  31--September 4, 2026, Bath, UK}

\section{Introduction}

This paper studies factorization of skew polynomials over a function
field over a cyclotomic field with a dilation automorphism. This is a
characteristic-zero analogue of the Ore polynomial setting over
$\FF_q(t)$ that underlies earlier skew factorization algorithms. In
the root-of-unity quantum-plane case, the large univariate centre
makes a ``bound first'' approach effective.

Let $\K=\QQ(\omega)$, where $\omega\in\CC$ is a primitive $m$-th root
of unity, and let
\[
  \R=\K(t)[x;\sigma], \qquad xa=\sigma(a)x \ \ (\mbox{for any } a\in
  \K(t)), \qquad \sigma(t)=\omega t.
\]
Because $\sigma$ has finite order $m$, its fixed field is $\K(t^m)$
and the centre of $\R$ is the commutative principal ideal domain
\[
  \C=\K(t^m)[x^m]=\K(\Tau)[\Chi], \qquad \Tau=t^m, \ \Chi=x^m.
\]
The centre is a polynomial ring in the single central variable
$\Chi$ over the rational function field $\K(\Tau)$.

For our algorithms the initial input should first be cleared of
denominators, but the factorization stages are carried out for a monic
associate.  Let $F_{\rm in}\in \R\setminus\{0\}$ be the given input.
Clearing denominators and removing scalar content chooses an integral
representative $F\in \K[t][x;\sigma]$ and a unit
$\gamma\in\K(t)^\times$ with $F=\gamma F_{\rm in}$.  Put
$\ell=\operatorname{lc}_x(F)$ and $f=F^\sharp:=\ell^{-1}F$.  Then $f$
is monic and $F_{\rm in}=\gamma^{-1}\ell f$.  The scalar
$\gamma^{-1}\ell$ is a unit of $\R$, so $F_{\rm in}$ and $f$ have the
same right divisors and the same irreducibility status, up to
multiplication by units.  Below, unless explicitly stated otherwise,
$f$ denotes this working monic associate. A factorization
\tightdisplay{
  f=f_1f_2\cdots f_k
}
into irreducibles in $\R$ gives the corresponding factorization
\tightdisplay{
  F_{\rm in}=\gamma^{-1}\ell f_1f_2\cdots f_k .
}
Denominators in $t$ may necessarily occur in the factors even when the
chosen integral representative $F$ has coefficients in $\K[t]$.

We use right-divisibility conventions throughout: $h$ is a right
divisor of $g$ if $g=qh$.  We write $\gcrd(a,b)$ for the monic
greatest common right divisor, equivalently the monic generator of
$\R a+\R b$.  For $h\ne 0$, $\rrem(g,h)$ denotes the right remainder
in the division $g=qh+r$, where $r=0$ or $\deg_x r<\deg_x h$.

A nonzero $f\in\R$ is \boldemph{bounded} if $\R f\cap\C\ne 0$, i.e.,
if $f$ has a nonzero central left multiple.  This is automatic in
$\R$, since it is free over $\C$ and is a domain.  Following
\cite{Jac43}, the \boldemph{bound} of $f$ is the unique monic
$\varphi\in\C\cap\R f$ of minimal degree in $\Chi$, say $\varphi=uf$
with $u\in\R$.  Such a bound is unique and computable by linear
algebra.

The overall algorithmic approach initially follows that of
\cite{Gie98} and subsequent algorithms. First we compute a monic bound
$\varphi\in\C$. If $\varphi$ factors in $\C$, then right gcd computations
split $f$ into a rough factorization whose pieces have irreducible
central bounds; the multiplicities of central factors are handled
recursively. The difficult case is therefore a factor whose
bound is central and irreducible. This is the same point at which the
general theory over $\FF_q(t)$ passes from commutative central
factorization to simple algebras and zero divisors
\citep{GieZha03,GomLob19}. The same difficulty appears here
over $\K(t)$.

Our proposed method is a two-level modular approach. We proceed in
characteristic zero long enough to exploit the centre, the reduced
norm, and the cyclic-algebra structure attached to an irreducible
bound. We then specialize the single central parameter $\Tau=t^m$ to a
good algebraic value $a$, so that the problem descends to a cyclic
algebra over a number field. Finally, we reduce further at good inert
primes, factor the resulting finite-field images using the algorithms
of \cite{Gie98} or successors, such as \cite{CarLeb17}, and lift
factors back first to the specialized number field and then
to $\K(t)$.

Over $\K(t)$ we exhibit a central decomposition
algorithm, two inexpensive irreducibility filters, and a two-level
modular search for the remaining irreducible-bound case.
The search may return failure, but every factor it returns is verified
by exact division in the original ring.

When we enlarge the constants from $\K$ to $\QQbar$, we obtain
stronger results, albeit for a different factorization problem whose
output may be much larger. In this setting over an algebraically
closed field, the irreducible-bound algebra splits, by Tsen's theorem,
and full factorization becomes decidable in the algebraic model based
on finite extensions and root adjunction. We briefly discuss the
corresponding factorization problem in the Blum--Shub--Smale (BSS)
model over $\CC$, more precisely in BSS augmented with a univariate
root-choice primitive.

Section \ref{sec:background} surveys previous relevant work. Sections
\ref{sec:qplanes}--\ref{sec:alg} develop our proposed algorithm over
$\K(t)$. Section \ref{sec:complexity} discusses complexity and
practical remarks, Section \ref{sec:history} looks at the history and
related literature, while Section \ref{sec:algstatus} identifies the
parts that are currently algorithmic over $\K(t)$. Section
\ref{sec:implementation} describes our prototype implementation in
SageMath.  In Section \ref{sec:algclosure} we prove decidability over
$\QQbar(t)$ and explain the corresponding BSS interpretation.

\section{Background and previous work}
\label{sec:background}

\cite{Ore33} introduced skew polynomial rings and their basic
algebraic structure. Over finite coefficient fields, \cite{Gie98} gave
the first polynomial-time factorization algorithm for
$\FF_q[x;\sigma]$, and \citet{CarLeb17} later developed algorithms
with improved complexity and an implementation. Over $\FF_q(t)$,
\citet{GieZha03} established the centre/eigenring viewpoint that
underlies the present paper, although the general irreducible-bound
case requires subroutines whose general availability was
overstated. This issue is identified by \citet[Sec.~6.2]{GomLob19}: once
the bound is central and irreducible, we have to find zero divisors in
simple algebras over finite extensions of the centre. 

In the present paper we consider the univariate quantum plane over a
rational function field at a root of unity. This is exactly the
setting where $x^m$ and $t^m$ are central, so the centre is a
univariate polynomial ring over $\K(\Tau)$. Low-degree factorization
in quantum planes was studied by \citet{CouPri06}; related
irreducibility criteria for skew polynomials of degree at most four,
with applications to quantum planes and quantized Weyl algebras, were
given by \citet{BroPum21}. We are not aware of a general factorization
algorithm in the root-of-unity quantum-plane setting considered~here.

\section{Quantum planes, bounds, and factor degrees}
\label{sec:qplanes}

We begin with the structural results that govern every monic element
$f\in\R$ whose monic bound is an irreducible central polynomial
$\pi\in\C$.

The structure theory gives the degree pattern in this
irreducible-bound case; see \cite{GieZha03} and \cite{GomLob19}.

\enlargethispage{25pt}
\begin{fact}
  \label{fact:degrees}
  Assume that the monic bound of $f\in\R$ is an irreducible polynomial
  $\pi\in \C$ with $\pi\neq \Chi=x^m$. Set $s=\deg_\Chi(\pi)$,
  $\F=\C/(\pi)$, and $A=\R/\R\pi$.  Since $\pi$ is central,
  $\R\pi=\pi\R$ is a two-sided ideal.  The algebra $A$ is a central
  simple $\F$-algebra of degree $m$. Write $A\cong M_r(D)$, with $D$ a
  central division $\F$-algebra of degree $d=\ind(A)$; thus $rd=m$.
  Since any nonconstant right factor of $f$ has monic bound dividing
  $\pi$, and $\pi$ is irreducible, every such factor also has bound
  $\pi$.  Since $\deg_x\pi=ms$, every irreducible factor of $f$ in
  $\R$ has degree $e=sd$ in $x$.  In particular, every complete
  factorization of $f$ has the form \tightdisplay{ f=h_1h_2\cdots
    h_\ell, \qquad \deg_x h_i = s d \quad (1\leq i\leq \ell), \qquad
    \ell=(\deg_x f)/(sd), } and the only possible irreducible factors
  are of degree $sd$, with $d\mid m$.
\end{fact}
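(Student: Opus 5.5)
The plan is to pass from right divisors of $f$ to right ideals of the quotient $A=\R/\R\pi$, and to count $x$-degrees as $\F$-dimensions. Everything happens inside $A$, where Wedderburn theory pins down the possible dimensions.

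The first task is to establish that $A$ is central simple of degree $m$ over $\F$. The ring $\R$ is free of rank $m^2$ over $\C$, with basis $t^ix^j$ for $0\le i,j<m$. Since $\R\pi=\pi\R$, the quotient $A$ is an $\F$-algebra of dimension $m^2$, and it is the specialization of the cyclic algebra $(\K(t)/\K(\Tau),\sigma,\Chi)$ at $\Chi\mapsto\bar\Chi$. Concretely, $A\cong (\F(t)/\F,\sigma,\bar\Chi)$ with $\F(t)=\F\otimes_{\K(\Tau)}\K(t)$. The hypothesis $\pi\ne\Chi$ makes $\bar\Chi$ a unit in $\F$. I will show that $\F\otimes\K(t)$ is a field Galois over $\F$ with group $\langle\sigma\rangle$. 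This holds because $\K(t)/\K(\Tau)$ is Galois of degree $m$ and $\F/\K(\Tau)$ is linearly disjoint from it, since $\pi$ lies in $\K(\Tau)[\Chi]$ and the extension is generated by a root in $\Chi$. If $\pi$ happened to have $\Chi$-degree zero this fails, but a monic nonconstant $\pi$ avoids that case. A cyclic algebra with unit parameter is central simple of degree $m$, so Wedderburn gives $A\cong M_r(D)$ with $rd=m$.

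Next comes the claim that right factors inherit the bound $\pi$. If $f=gh$ with $h$ nonconstant, then the bound $\varphi_h$ of $h$ divides any central multiple of $h$, and $\pi\in\R f\subseteq\R h$, so $\varphi_h\mid\pi$. Irreducibility of $\pi$ together with nonconstancy of $h$ then forces $\varphi_h=\pi$. The same argument applies to left factors, using the two-sided ideal $\R\pi=\pi\R$ and the symmetric notion of bound. Therefore every factor $h_i$ in a complete factorization, being a right factor of a suffix of $f$, has bound $\pi$.

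The degree count is the key step. A monic $h$ with $\pi\in\R h$ gives a left ideal $\R h\supseteq\R\pi$, and hence a left ideal $\R h/\R\pi$ of $A$ of $\K(t)$-codimension $\deg_x h$. As a $\K(t)$-space, $\R/\R h$ has dimension $\deg_x h$. Comparing the $\K(\Tau)$-dimension $m\deg_x h$ with $[\F:\K(\Tau)]=s$, we get $\dim_\F A/(\R h/\R\pi)=m\deg_x h/s$. Left ideals of $M_r(D)$ have $\F$-dimension $k\cdot md$ for $0\le k\le r$, namely $k$ copies of the simple module $D^r$, each of dimension $rd^2=md$. So $m\deg_x h/s$ is a multiple of $md$, which gives $sd\mid\deg_x h$. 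Now let $h$ be irreducible. Then $\R h$ is maximal among left ideals containing $\R\pi$: any proper intermediate ideal yields a right factorization of $h$ in the PID-like ring. Maximal left ideals of $M_r(D)$ have codimension exactly one simple module, that is, $md$, so $\deg_x h=sd$. Summing the degrees over a complete factorization then gives $\ell=\deg_x f/(sd)$.

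I expect the main obstacle to be the central simplicity of $A$ of degree exactly $m$, specifically that $\F\otimes_{\K(\Tau)}\K(t)$ is a field. Since $\K(t)=\K(\Tau)(t)$ with $t^m=\Tau$, I will first try to show that $X^m-\Tau$ remains irreducible over $\F$. If it does not, I will instead argue directly that the two-sided ideals of $A$ correspond to central divisors of $\pi$, which gives simplicity from irreducibility of $\pi$ via the standard bound theory of \cite{Jac43}.
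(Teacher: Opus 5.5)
Your bound-inheritance step and your dimension count are correct: the quotient $A/(\R h/\R\pi)\cong\R/\R h$ has $\F$-dimension $m\deg_x h/s$, which you compare with the left ideals of $M_r(D)$. This is the structure-theory argument the paper cites without proof. There is a wording slip: $h_i$ is a right factor of the prefix $h_1\cdots h_i$, or a left factor of a suffix, but your left-factor remark covers it.

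The gap is in the first step, and the claim there is false, not just unproved: $\F\otimes_{\K(\Tau)}\K(t)=\F[y]/(y^m-\Tau)$ need not be a field. Take $f=x^m-t$. Its bound is $\pi=\Chi^m-\Tau=\prod_{k}(\Chi-\omega^k t)$, which is irreducible in $\K(\Tau)[\Chi]$ (Eisenstein at $\Tau$) and different from $\Chi$. Then $\F=\K(\bar\Chi)$ with $\bar\Chi^m=\Tau$. So $y^m-\Tau=\prod_k(y-\omega^k\bar\Chi)$ splits, and $\F\otimes_{\K(\Tau)}\K(t)$ is a product of $m$ copies of $\F$; in fact $A\cong M_m(\F)$ here. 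Nothing about $\pi$ being a polynomial in $\Chi$ stops $\Tau$ from becoming a $p$-th power in $\F$ for a prime $p\mid m$. Your identification of $A$ with a cyclic algebra over the field $\F(t)$ therefore breaks down, and your fallback becomes mandatory.

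The fallback as stated only shows that $A$ is simple: two-sided ideals of $\R$ are generated by invariant elements $x^kc$ with $c\in\C$, and $x\nmid_r\pi$. Your count, however, uses $\dim_\F D=d^2$, $rd=m$ and $\dim_\F D^r=md$, and all three need $Z(A)=\F$. If the centre were a proper extension of $\F$, these numbers would change, and so would the conclusion $\deg_x h=sd$.

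The missing piece is short.
\begin{itemize}
\item Let $S$ be the image of $\K(t)[\Chi]$ in $A$. Since $\K(t)[\Chi]\cap\R\pi=\K(t)[\Chi]\,\pi$, you get $S\cong\K(t)\otimes_{\K(\Tau)}\F$. This is a commutative $\F$-algebra, possibly not a field, and $A=\bigoplus_{j<m}S\bar x^j$ is free over $S$.
\item If $z=\sum_j s_j\bar x^j$ is central, then $0=\bar t z-z\bar t=\sum_j(1-\omega^j)\bar t\,s_j\bar x^j$. For $0<j<m$, $(1-\omega^j)\bar t$ is a unit of $S$ because $\bar t^m=\Tau\in\F^\times$, so $z=s_0\in S$.
\item Commuting with $\bar x$ gives $\sigma(s_0)=s_0$. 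Since tensoring over the field $\K(\Tau)$ is exact, $S^\sigma=\K(t)^\sigma\otimes_{\K(\Tau)}\F=\F$.
\end{itemize}
Combined with simplicity and $\dim_\F A=m^2$, this shows $A$ is central simple of degree $m$. Equivalently, $A$ is the crossed product of the cyclic Galois algebra $S/\F$ with unit parameter $\bar\Chi$, hence Azumaya over $\F$. With this in place, the rest of your argument goes through.
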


Note that the exceptional central prime $\Chi=x^m$ is elementary: $x$
is a normal irreducible element, and if $x$ is a right factor of $f$,
then we first strip the maximal right power of $x$.

The following inexpensive irreducibility criterion is an immediate
consequence.
\begin{corollary}
  \label{cor:maxdeg}
  Let $f\in \R$ be monic and assume that its monic bound is an
  irreducible polynomial $\pi\in\C$ with $\pi\neq\Chi$. If
  $\deg_x(\pi)=m\,\deg_x(f)$, then $f$ is irreducible.
\end{corollary}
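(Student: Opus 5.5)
We derive this directly from Fact~\ref{fact:degrees}, via the degree pattern of complete factorizations. Write $s=\deg_\Chi(\pi)$, so that $\deg_x\pi = ms$. The hypothesis $\deg_x(\pi)=m\deg_x(f)$ then gives $\deg_x f = s$.

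By Fact~\ref{fact:degrees}, every irreducible factor of $f$ has degree $sd$ in $x$, where $d=\ind(A)\ge 1$. Any complete factorization $f=h_1\cdots h_\ell$ therefore satisfies $\ell = (\deg_x f)/(sd) = 1/d$. Since $\ell$ is a positive integer (a nonconstant $f$ has at least one irreducible factor), this forces $d=1$ and $\ell=1$. So every complete factorization of $f$ has exactly one factor, and $f$ is irreducible.

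Two small points need to be checked along the way.

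\emph{Existence of a complete factorization.} Fact~\ref{fact:degrees} describes complete factorizations, so we must know one exists. Since $\R$ is a left and right Euclidean domain in $x$, any nonconstant $f$ factors into irreducibles by induction on $\deg_x f$.

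\emph{$f$ is nonconstant.} If $f=1$, its bound would be $1$ rather than an irreducible $\pi$, so $\deg_x f\ge 1$.

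Alternatively, we can bypass complete factorizations. Suppose $f=gh$ with $h$ a proper nonconstant right factor. By the Fact, $h$ also has bound $\pi$. Any irreducible right factor of $h$ has degree $sd\ge s=\deg_x f$, which contradicts $\deg_x h<\deg_x f$.

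I expect no real obstacle: the content is entirely the degree bookkeeping $\deg_x\pi=ms$ together with $d\ge1$. The only care needed is to state explicitly that $\deg_x\pi = m\deg_\Chi\pi$, which holds because $\Chi=x^m$ and $\pi$ is monic in $\Chi$.
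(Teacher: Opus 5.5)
Your proposal is correct and matches the paper, which presents this as an immediate consequence of Fact~\ref{fact:degrees}. You make that deduction explicit: $\deg_x\pi = m\deg_\Chi\pi = ms$ gives $\deg_x f = s$, while every irreducible factor has degree $sd \ge s$, which forces $d=\ell=1$.
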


\section{Central decomposition}
\label{sec:centraldecomp}

We first separate the part of the input that is already controlled by
the centre from the genuinely irreducible-bound part. The normal
factor $x$ is handled explicitly before the central recursive step.

\begin{algorithm}[H]
\caption{Central decomposition stage}
\label{alg:central}
\textbf{Input:} a monic nonzero polynomial $f\in \R$.

\textbf{Output:} a rough product decomposition
$f=g_1g_2\cdots g_\ell$, in which each $g_j$ is either the irreducible
factor $x$ or is monic with irreducible monic bound $\pi_j\in\C$,
$\pi_j\neq\Chi$.

\begin{enumerate}[label=\textup{(\arabic*)},leftmargin=2.2em]
\item Strip the maximal right power of $x$: write $f=f_0x^\varepsilon$, with
  $\varepsilon\geq 0$ and either $f_0=1$ or $x\nmid_r f_0$. If $f_0=1$, return
  $\varepsilon$ copies of $x$.
\item Compute the monic bound $\varphi\in \C$ of $f_0$ in
  $\R=\K(t)[x;\sigma]$, for example by linear algebra or by the
  algorithm of \citet{GomLob19}. Since $x\nmid_r f_0$, the bound
  $\varphi$ is not divisible by $\Chi$.
\item If $\varphi$ is irreducible in $\C$, return the one-term
  factorization $f_0=g_1$, followed by $\varepsilon$ copies of $x$.
\item Otherwise choose a monic irreducible factor $\pi\neq\Chi$ of
  $\varphi$ and write $ \varphi=\rho\pi$, with $\rho,\pi\in\C$ monic
  and nonconstant.
\item Compute the monic right gcd in $\R$: $v:=\gcrd(f_0,\pi)$.
\item Compute the right quotient $u\in\R$ satisfying $f_0=uv$.
\item Apply the present algorithm recursively to $u$. If the recursive
  call returns
  \tightdisplay{
    u=u_1u_2\cdots u_r,
    \hspace*{15pt}\mbox{then return}\hspace*{15pt}
    f=u_1u_2\cdots u_r v x^\varepsilon.
  }
\end{enumerate}
\end{algorithm}

Algorithm~\ref{alg:central} is the standard recursive rough-factor
stage of the bound method: at each step it strips off a factor
attached to a single irreducible central divisor and then recurses on
the cofactor. The word ``rough'' is important: the output pieces
have irreducible central bounds, but they need not be irreducible
skew polynomials.

\begin{proposition}
  \label{prop:centralcorrect}
  Algorithm~\ref{alg:central} terminates. Apart from the explicitly
  returned copies of $x$, each returned factor is monic and has an
  irreducible monic bound different from $\Chi$.
\end{proposition}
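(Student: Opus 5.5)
The plan is to argue by induction on $\deg_x f$. We show that each recursive call is made on a monic polynomial of strictly smaller degree, and that each non-$x$ piece produced at the current level has irreducible monic bound different from $\Chi$. The pieces produced at the current level are $f_0$ in step~(3) and $v$ in step~(5); the pieces from the recursive call are covered by the induction hypothesis.

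\emph{Base cases.} If step~(1) returns because $f_0=1$, only copies of $x$ are returned and there is nothing to check. If step~(3) returns $g_1=f_0$, then $f_0$ is monic, because $f$ and $x^\varepsilon$ are monic. Its bound $\varphi$ is irreducible by the test in step~(3). It differs from $\Chi$ because $x\nmid_r f_0$, as noted in step~(2).

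\emph{Main step: $v$ is nonconstant.} This is where the work lies. Since $\pi\in\R\pi+\R f_0=\R v$, we have $\pi\in\R v\cap\C$, so the monic bound of $v$ divides $\pi$. Suppose, for contradiction, that $v=\gcrd(f_0,\pi)=1$. Then $1=af_0+b\pi$ for some $a,b\in\R$. Consider the left module $M=\R/\R f_0$, on which central elements act. For any $g\in\R$,
\[
g = gaf_0 + gb\pi \equiv \pi\,(gb) \pmod{\R f_0},
\]
using that $\pi$ is central. Hence $\pi M=M$, and therefore $\rho M=\rho\pi M=\varphi M=0$, since $\varphi\in\R f_0$ and $\varphi$ is central. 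In particular $\rho=\rho\cdot 1\in\R f_0$. So $\rho$ is a nonzero central left multiple of $f_0$ of smaller $\Chi$-degree than $\varphi$, contradicting the minimality of the bound. Hence $v$ is nonconstant.

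\emph{Properties of $v$.} The bound of $v$ is monic, nonconstant because $v$ is nonconstant, and divides the irreducible $\pi$, so it equals $\pi$. Moreover $\pi\neq\Chi$ by the choice in step~(4), and $v$ is monic by the definition of $\gcrd$.

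\emph{Recursion on $u$.} Since $f_0=uv$ with $f_0$ and $v$ monic, comparing leading coefficients shows that $u$ is monic. Also $\deg_x u=\deg_x f_0-\deg_x v<\deg_x f_0\le\deg_x f$. Hence the recursion terminates, with the case $u=1$ handled trivially by step~(1). By induction, the pieces $u_i$ are copies of $x$ or monic with irreducible monic bound $\ne\Chi$, and together with $v$ and $x^\varepsilon$ this gives the claim.

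The only nonroutine point is the nontriviality of $v$. It rests on the centrality of $\pi$ and on the minimality of the bound, exactly as in the module argument above.
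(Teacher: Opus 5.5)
Your main step is correct and is essentially the paper's argument. The left B\'ezout identity $1=af_0+b\pi$, together with centrality of $\pi$ and $\rho$, forces $\rho\in\R f_0$ and contradicts minimality of $\varphi$; your module $M=\R/\R f_0$ just repackages this. Omitting the paper's extra check that $v\neq f_0$ is harmless, since $u=1$ is handled by step~(1).

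The gap is that you never prove that $x\nmid_r f_0$ implies $\Chi\nmid\varphi$. You cite it ``as noted in step~(2)'', but that sentence is an assertion of the algorithm which the proposition has to justify, and the paper's proof closes with exactly this argument. You need it in two places:
\begin{itemize}
\item When step~(3) returns $f_0$, the claim that its irreducible bound is not $\Chi$ rests entirely on it.
\item Step~(4) presupposes an irreducible factor $\pi\neq\Chi$ of the reducible $\varphi$. Such a factor would not exist if $\varphi=\Chi^k$ with $k\ge 2$.
\end{itemize}
So your ``$\pi\neq\Chi$ by the choice in step~(4)'' and your termination claim both depend on this unproved fact.

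The missing argument is short.
\begin{itemize}
\item Since $x\nmid_r f_0$, the constant coefficient $c_0$ of $f_0$ is nonzero, so $c_0^{-1}f_0=1+n$ with $n\in\R x$.
\item As $\R x=x\R$ is two-sided, $n^m\in\R x^m=\R\Chi$, so $f_0$ is a unit in the ring $\R/\R\Chi$.
\item If $\Chi\mid\varphi$, then $\varphi=wf_0\in\R\Chi$ forces $w=w'\Chi$.
\item Hence $\varphi/\Chi=w'f_0$ is a central left multiple of $f_0$ of smaller degree, a contradiction.
\end{itemize}
Alternatively, invertibility modulo $\R\Chi$ gives $\gcrd(f_0,\Chi)=1$. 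Your own module argument, applied with $\Chi$ in place of $\pi$ and $\rho=\varphi/\Chi$, then yields the same contradiction, since it only used that $\pi$ is central and nonconstant. With this lemma inserted, your proof is complete.
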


\begin{proof}[Sketch]
  If $\gcrd(f_0,\pi)=1$, a left B\'ezout identity shows that $\rho$ is
  a smaller central left multiple of $f_0$, contradicting minimality
  of $\varphi$; hence $v=\gcrd(f_0,\pi)$ is nonconstant. If $v=f_0$,
  then $\pi$ is itself a smaller central left multiple of $f_0$, again
  impossible. Thus the cofactor $u$ has smaller positive degree in
  $x$. Since $\pi$ is a central left multiple of $v$, the monic bound
  of $v$ divides $\pi$, hence equals $\pi$ because $\pi$ is
  irreducible. The case $\pi=\Chi$ is excluded by the initial
  stripping step. Finally, if $\varphi$ were divisible by $\Chi$, then
  $\varphi=w f_0$ would give $w f_0=0$ in $\R/\R\Chi$. Since
  $x\nmid_r f_0$, the image of $f_0$ is invertible in this
  skew-polynomial quotient, so $w\equiv0\pmod{\R\Chi}$ and
  $\varphi/\Chi$ would be a smaller central left multiple of $f_0$.
  Therefore the recursion proceeds on smaller $x$-degree and preserves
  the product identity by right division. \qed
\end{proof}

This approach to the computation just described is well established; see
\citep{GieZha03,GomLob19}. The hard input class is therefore when
$g\in\R$ has an irreducible monic bound $\pi\in\C$.  The remainder of
this article will focus on this case.

\subsection{Norms and cyclic algebras}
\label{subsec:norms}

Our algorithm above, and the methods described below, are based on
central bounds, their factorizations, and the remaining
irreducible-bound case. Alternatively, \emph{reduced norms} give
another commutative invariant, and have been used algorithmically in
skew factorization over finite fields and in the finite-order setting
\citep{CarLeb17, PumTho22}. This norm is another central invariant
associated with the same central prime as the irreducible-bound case.
For fixed $m$ it is relatively inexpensive to compute, and sometimes
gives a useful preliminary test before the specialization stage.

Let $\E$ be the field of fractions $\K(\Tau)(\Chi)$ of $\C$ and
$\Lgen:=\K(t)(\Chi)$.  Localizing $\R$ at the nonzero central elements
gives
\[
  \Agen:=\E\otimes_{\C}\R \cong (\Lgen/\E,\sigma,\Chi),
\]
the generic cyclic algebra of degree $m$. Here $\Lgen/\E$ is the
cyclic extension obtained from $t^m=\Tau$, the automorphism is induced
by $\sigma(t)=\omega t$, and the crossed-product relation is
$x^m=\Chi$.

For $g\in\R$, collect its image in $\Agen$ modulo powers of $x^m=\Chi$
and write it on the right $\Lgen$-basis $1,x,\dots,x^{m-1}$:
\[
  g=a_0+x a_1+\cdots+x^{m-1}a_{m-1}, \qquad a_i\in \Lgen.
\]
Equivalently, a left-coefficient expression $\sum b_i x^i$ is
converted by using $b_i x^i=x^i\sigma^{-i}(b_i)$. Let $M_g$ be the
$m\times m$ matrix of left multiplication by $g$ on this right
$\Lgen$-basis. We define
\[
  \Norm(g):=\det(M_g).
\]
This determinant is the reduced norm of $g$ in $\Agen$, not the
determinant of left multiplication on the full $\E$-vector space
$\Agen$, which would be a power of the reduced norm. Although the
matrix entries lie in $\Lgen$, the determinant lies in $\E$; for
$g\in\R$, the finite-order skew-polynomial norm results of
\citet{PumTho22} imply that in fact
\[
  \Norm(g)\in \C=\K(\Tau)[\Chi].
\]
Moreover $\Norm(g)$ is a central left multiple of $g$. Thus, the monic
bound of $g$ divides the monic associate of $\Norm(g)$ in $\C$.
With classical linear algebra, after collecting coefficients modulo
$x^m=\Chi$, the determinant above can be computed using $O(m^3)$
arithmetic operations over $\Lgen$. 

The following sufficient test is a direct consequence of the
multiplicativity of the reduced norm; see \citep{PumTho22}.

\begin{proposition}
  \label{prop:normprefilter}
  Let $g\in\R$ be monic. If the monic associate of $\Norm(g)$ is
  irreducible in $\C$, then $g$ is irreducible in $\R$.
\end{proposition}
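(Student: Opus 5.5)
We argue by contradiction: assume that the monic associate of $\Norm(g)$ is irreducible in $\C$ but that $g$ factors nontrivially in $\R$. Then $g=g_1g_2$ with $g_1,g_2\in\R$ and $\deg_x g_i\geq 1$. We take the reduced norm to be multiplicative on $\Agen$, so that $\Norm(g)=\Norm(g_1)\Norm(g_2)$, where each factor lies in $\C=\K(\Tau)[\Chi]$ by the result of \citet{PumTho22} quoted above. The contradiction will follow once we show that each $\Norm(g_i)$ has positive $\Chi$-degree. In that case the monic associate of $\Norm(g)$ is a product of two nonconstant polynomials in $\C$, and so it is not irreducible.

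The key step is therefore a degree formula. For monic $h\in\R$, we claim that $\Norm(h)$ has degree exactly $\deg_x h$ in $\Chi$, up to a unit of $\K(\Tau)$; more precisely, $\deg_\Chi\Norm(h)=\deg_x h$ with leading coefficient a unit. We plan to prove this with a valuation argument. Put the $\Chi^{-1}$-adic (equivalently, $x$-degree) grading on $\Agen$ and write $h=x^n+\text{lower}$, with $n=\deg_x h$. Write $n=qm+r$ with $0\le r<m$. Then $x^n=\Chi^q x^r$, and left multiplication by $x^r$ on the basis $1,x,\dots,x^{m-1}$ is a permutation-like matrix whose entries are powers of $\Chi$ and $\sigma$-twists. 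This matrix has determinant $\pm\Chi^{r}$ up to the Galois twist, so $\Norm(x^n)=\pm\Chi^{qm+r}\cdot(\text{unit})$, which has $\Chi$-degree $n$.

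Next we must show that the lower-order terms of $h$ do not cancel this top term. For this we regard the entries of $M_h$ as elements of $\Lgen$, which carries the valuation $-\deg_\Chi$ extended to $\Lgen=\K(t)(\Chi)$. We show that $M_h=M_{x^n}(I+N)$, where every entry of $N$ has strictly negative $\Chi$-degree, so that $\det(I+N)$ has $\Chi$-degree $0$ with leading term $1$. This holds because each monomial $b_ix^i$ with $i<n$, measured after multiplying by $x^{-n}$, carries total degree $i-n<0$, and $\Chi$-degree is just $x$-degree divided by $m$ inside the graded algebra. The non-cancellation follows.

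We expect this grading bookkeeping to be the main obstacle. Both the rows of $M_h$ and its fractional-degree entries need care: an entry attached to $x^i$ has $\Chi$-degree $\lfloor\cdot/m\rfloor$, not $i/m$. If this becomes awkward, a simpler alternative is to pass to a splitting extension in which $\Agen\otimes\Lgen\cong M_m(\Lgen)$. There the reduced norm is the determinant of an $m\times m$ matrix whose degrees we control directly. A second alternative avoids the degree formula altogether. Since $\Norm(g_i)$ is a central left multiple of $g_i$, it cannot be a nonzero constant: if it were, then $g_i$ would divide a unit of $\R$ and hence be a unit. So $\Norm(g_i)$ is nonconstant whenever $\deg_x g_i\ge1$, which gives the needed nonconstancy without computing degrees. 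We would actually use this last argument, because it relies only on facts already stated in the paper together with multiplicativity.
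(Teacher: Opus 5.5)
Your proof is correct. Its skeleton is the paper's: write $g=g_1g_2$ with $\deg_x g_i\ge 1$, use multiplicativity, and show that neither $\Norm(g_i)$ is a unit of $\C$.

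The difference is in the one nontrivial step. The paper justifies non-unit-ness by asserting that, for monic $h$ of positive $x$-degree, $\det M_h$ has positive $\Chi$-degree. That is essentially the degree formula $\deg_\Chi\Norm(h)=\deg_x h$ you first set out to prove, and the paper does not carry out the computation. Your final argument avoids the matrix altogether:
\begin{itemize}
\item $\Norm(g_i)=w_ig_i$ with $w_i\in\R$;
\item $\Norm(g_i)\neq 0$, because $\Norm(g)\neq 0$, which is implicit in the hypothesis;
\item if $\Norm(g_i)\in\K(\Tau)^\times$, then $\deg_x w_i+\deg_x g_i=0$ in the domain $\R$, which is impossible.
\end{itemize}
This gets the needed non-constancy from facts the paper already records, and it does not need the $g_i$ to be monic. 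The price is that you learn only $\deg_\Chi\Norm(g_i)\ge 1$, not the exact degree the paper's route would give.

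Note that your argument genuinely uses the integrality statement $\Norm(g_i)\in\R g_i\cap\C$, not merely $\Norm(g_i)\in\Agen g_i$. In the division algebra $\Agen$ every nonzero element right-divides a unit, so the contradiction comes precisely from $w_i$ lying in $\R$.

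You were right to set the valuation sketch aside, because its key claim is false as stated. For $m=2$ and $h=x^2+bx$ one gets
\[
N=\begin{pmatrix}0&b\\ \sigma^{-1}(b)\Chi^{-1}&0\end{pmatrix},
\]
whose entry $b$ has $\Chi$-degree $0$, not strictly negative. The sketch can be repaired by replacing $N$ with $DND^{-1}$, where $D=\operatorname{diag}(\Chi^{j/m})_{0\le j<m}$, working over $\K(t)(\Chi^{1/m})$. After this rescaling, every entry has $\Chi$-degree at most $-1/m$, so $\det(I+N)$ has leading term $1$ and $\deg_\Chi\Norm(h)=\deg_x h$ follows.
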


\begin{proof}
  If $g=qh$ with $q,h$ monic of positive $x$-degree, multiplicativity
  gives $\Norm(g)=\Norm(q)\Norm(h)$. For a monic skew polynomial of
  positive $x$-degree, the determinant formula has positive
  $\Chi$-degree; hence neither norm factor is a unit of $\C$. Thus an
  irreducible monic associate of $\Norm(g)$ rules out such a
  factorization. \qed
\end{proof}

Following Algorithm~\ref{alg:central}, it is useful to compute the
reduced norm on factors in the rough factorization (with irreducible
bounds).  If Proposition~\ref{prop:normprefilter} applies we may be
able to certify the factor irreducible.  This test is not necessary
for correctness, but can be useful in an implementation.

\section{Good characteristic-zero specializations $\Tau\mapsto a$}
\label{sec:goodspecs}

We now turn to the problem of factoring an irreducible-bound input
$g\in \R$ with monic bound
\[
  \pi\in\C, \qquad s=\deg_\Chi \pi, \qquad n_g=\deg_x g.
\]
Fix for the remainder of this section a left quotient $u\in \R$
with $\pi=ug$.  Our first modular level specializes the single central
parameter $\Tau=t^m$ to an algebraic integer $a\in \OK$, where $\OK$
is the ring of all algebraic integers in $\K$. To make specialization
explicit, write every coefficient of $g$ in the $\K(\Tau)$-basis
$1,t,\dots,t^{m-1}$:
\[
  \K(t)=\bigoplus_{j=0}^{m-1} \K(\Tau)t^j,
\]
so that
\[
  g = \sum_{i=0}^{n_g}
      \left(\sum_{j=0}^{m-1} c_{ij}(\Tau)t^j\right)x^i,
  \qquad c_{ij}(\Tau)\in \K(\Tau).
\]
The same applies to the coefficients of $u$ and of $\pi$. Clearing
finitely many denominators in these expansions gives a finite set of
bad values of $\Tau$ that must be avoided for specialization even to
be defined.

\begin{definition}
  \label{def:goodspec}
  An algebraic integer $a\in \OK$ is called a \boldemph{good central
    specialization for $(g,\pi)$}, or simply \boldemph{good for
    $(g,\pi)$}, if all of the following hold:
  \begin{enumerate}[label=\textup{(S\arabic*)},leftmargin=2.2em]
  \item no denominator occurring in the $\Tau$-coefficients of $g$,
    $u$, or $\pi$ vanishes at $\Tau=a$;
  \item the specialization of $\pi$ has the same $\Chi$-degree as
    $\pi$, and $\disc_\Chi(\pi(a,\Chi))\neq 0$;
  \item the specialized central polynomial
  \tightdisplay{
    \pi_a(\Chi):=\pi(a,\Chi)\in \K[\Chi]
  }
  is irreducible and satisfies $\pi_a\neq \Chi$;
  \item the Kummer polynomial $y^m-a\in \K[y]$ is irreducible.
\end{enumerate}
\end{definition}

Condition \textup{(S2)} is mostly automatic from monicity once a fixed
presentation has been cleared of denominators, but it is useful to
record explicitly because discriminants and leading coefficients will
also be excluded in later reductions. Condition \textup{(S4)} implies
that
\tightdisplay{
  \L_a=\K(\theta_a), \qquad \theta_a^m=a,
}
has degree $m$ over $\K$. Since $\K$ contains the group $\mu_m$ of
$m$-th roots of unity, the map
\tightdisplay{
  \sigma_a(\theta_a)=\omega\theta_a
}
extends to a $\K$-automorphism of $\L_a$ of order $m$, and its fixed
field is $\K$. Thus, the centre of $R_a:=\L_a[x;\sigma_a]$ is 
$Z(R_a)=\K[x^m]=\K[\Chi]$.

For the correctness statements below we only use that any chosen $a$
satisfying Definition~\ref{def:goodspec} behaves as claimed.  Let
$\Delta(\Tau)\in\K[\Tau]$ be a common denominator for all coefficients
of $g$, $u$, and $\pi$ when these are written in the $\K(\Tau)$-basis
$1,t,\dots,t^{m-1}$.  Set
\[
  B_\Delta
  =\K[\Tau,\Delta(\Tau)^{-1},t]/(t^m-\Tau)
  \subset \K(t).
\]
For $a$ satisfying \textup{(S1)}, define the specialization
homomorphism
\tightdisplay{
  \sp_a:B_\Delta\longrightarrow \L_a, \qquad
  \sum_{j=0}^{m-1} r_j(\Tau)t^j\longmapsto
  \sum_{j=0}^{m-1} r_j(a)\theta_a^j,
}
where each $r_j(\Tau)$ is regular at $\Tau=a$. Thus $\sp_a$ is not a
homomorphism from all of $\K(t)$, but from the localized coefficient
ring on which the chosen input data are defined.

Applying this map coefficientwise to $g$ and $u$ produces specialized
skew polynomials
\tightdisplay{
  g_a,u_a\in R_a=\L_a[x;\sigma_a]
}
satisfying $\pi_a=u_ag_a$.  Because $g$ is monic in $x$, the
specialization $g_a$ is also monic and nonconstant.

In practice we may sample rational integers $a\in\ZZ\subset\OK$, but
the definition of ``goodness'' uses exact tests. In particular,
irreducibility of $y^m-a$ is tested in $\K[y]$ using 
commutative factorization algorithms (see, e.g., \citep{GatGer13}).
The specialized irreducible-bound problem remains a cyclic-algebra
problem, but now over a number field.

\begin{proposition}
  \label{prop:specialized}
  Let $a$ be good for $(g,\pi)$. Then the monic bound of
  $g_a\in R_a=\L_a[x;\sigma_a]$ is $\pi_a$. Put
  \[
    F_a=\K[\Chi]/(\pi_a), \qquad A_a=R_a/R_a\pi_a .
  \]
  Then $A_a$ is a central simple $F_a$-algebra of degree $m$. If
  \[
    A_a\cong M_{r_a}(D_a),
  \]
  with $D_a$ a central division $F_a$-algebra of degree $d_a$, so
  $r_a d_a=m$, then every irreducible factor of $g_a$ in $R_a$ has
  $x$-degree $s d_a$.
\end{proposition}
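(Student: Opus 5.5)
We argue in three steps: locate the monic bound of $g_a$, identify $A_a$ as a cyclic algebra, and then rerun the argument of Fact~\ref{fact:degrees} over the number field $\L_a$.

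\emph{Step 1: the bound of $g_a$.}
The specialized identity $\pi_a=u_ag_a$ shows that $\pi_a$ is a nonzero central left multiple of $g_a$ in $R_a$. It is nonzero because it is monic of the same $\Chi$-degree $s$ by \textup{(S2)}. It is central because $Z(R_a)=\K[\Chi]$, which holds by \textup{(S4)}.

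The bound of $g_a$ is the monic generator of the ideal $R_ag_a\cap\K[\Chi]$ of the PID $\K[\Chi]$. Hence it divides $\pi_a$. By \textup{(S3)}, $\pi_a$ is irreducible, and $g_a$ is nonconstant, so the bound is not $1$. Therefore the bound equals $\pi_a$.

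\emph{Step 2: the structure of $A_a$.}
Since $\pi_a\ne\Chi$ is irreducible, $\Chi$ is invertible in $F_a=\K[\Chi]/(\pi_a)$. Write $\xi$ for the image of $\Chi$. Then
\[
  A_a\cong \L_a\otimes_\K F_a\,[x;\sigma_a]/(x^m-\xi),
\]
which is the cyclic crossed product $(\L_aF_a/F_a,\sigma_a,\xi)$ when $\L_a\otimes_\K F_a$ is a field.

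In general $\L_a\otimes_\K F_a$ is a Galois $F_a$-algebra with group $\langle\sigma_a\rangle$ of order $m$, because $\L_a/\K$ is cyclic of degree $m$ by \textup{(S4)}. The generalized cyclic crossed product $(\L_a\otimes_\K F_a,\sigma_a,\xi)$ with a unit $\xi$ is then a central simple $F_a$-algebra of dimension $m^2$, i.e.\ of degree $m$.

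This is the step I expect to require the most care. The plan is to quote the standard fact that a crossed product of a Galois algebra with a unit cocycle is Azumaya, here over a field, hence central simple. Alternatively, one can view $A_a$ as $(\L_a/\K,\sigma_a,\Chi)\otimes_{\K[\Chi]}F_a$ and apply base change of the Azumaya algebra $\L_a[x;\sigma_a][\Chi^{-1}]$ over $\K[\Chi^{\pm1}]$. Either way, Wedderburn's theorem then gives $A_a\cong M_{r_a}(D_a)$ with $r_ad_a=m$.

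\emph{Step 3: the factor degrees.}
This follows exactly as in Fact~\ref{fact:degrees}, run over $R_a$ in place of $\R$. Any nonconstant right factor $h$ of $g_a$ has a bound dividing $\pi_a$, hence equal to $\pi_a$. So $R_a/R_ah$ is a nonzero left $A_a$-module.

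The irreducible right factors $h$ correspond to the simple left $A_a$-modules $R_a/R_ah$. Each such module is isomorphic to $D_a^{d_a}$ on column vectors. Its dimension over $\K$ is
\[
  d_a^2[F_a:\K]\cdot r_a/m\cdot m/d_a\ \ldots
\]
more cleanly, the simple module has $F_a$-dimension $m\,d_a$. Its $\K$-dimension is therefore $s\,m\,d_a$.

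On the other hand, $R_a/R_ah$ has $\L_a$-dimension $\deg_x h$, hence $\K$-dimension $m\deg_x h$. Equating the two gives $\deg_x h=sd_a$.

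The only point needing checking here is that the dimension count of simple modules ($F_a$-dimension $d_ar_a\cdot d_a$, with $r_a=m/d_a$, giving $md_a$) is used exactly as in the generic case.
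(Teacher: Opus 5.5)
Your proposal is correct and follows the paper's proof. Step~1 is exactly the paper's argument: the specialized identity $\pi_a=u_ag_a$, centrality of $\pi_a$, and irreducibility from (S3). Steps~2--3 make explicit what the paper only invokes as ``the same argument as in Fact~\ref{fact:degrees}'', namely central simplicity of the crossed product $(\L_a\otimes_\K F_a,\sigma_a,\xi)$ with $\xi$ a unit, and the count $m\deg_x h=\dim_\K R_a/R_ah=s\,m\,d_a$ using (S2) and (S4).

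Two small repairs. First, the simple $A_a$-module is $D_a^{r_a}$, not $D_a^{d_a}$. Second, to cover every factor of a complete factorization $g_a=h_1\cdots h_\ell$, not only right factors, use the chain $R_ag_a\subset R_ah_2\cdots h_\ell\subset\cdots\subset R_ah_\ell\subset R_a$. Its successive quotients $R_a/R_ah_i$ are simple $A_a$-modules because $\pi_a$ annihilates $R_a/R_ag_a$.
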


\begin{proof}
  The identity $\pi=ug$ specializes to $\pi_a=u_ag_a$ in $R_a$. Since
  $\pi_a\in\K[\Chi]=Z(R_a)$, it is a central left multiple of
  $g_a$. Therefore the monic bound of $g_a$ divides $\pi_a$ in
  $\K[\Chi]$.

  The specialization $g_a$ is still monic in $x$, hence nonconstant.
  Thus, its bound is nonconstant. By \textup{(S3)}, $\pi_a$ is
  irreducible in $\K[\Chi]$ and $\pi_a\neq\Chi$, so the only possible
  nonconstant monic divisor of $\pi_a$ is $\pi_a$ itself. Hence the
  monic bound of $g_a$ is~$\pi_a$.

  Since $\pi_a$ is central, $R_a\pi_a=\pi_a R_a$ is a two-sided ideal.
  The ring $R_a$ has centre $\K[\Chi]$, and the image of $\Chi$ in
  $F_a$ is nonzero because $\pi_a\neq\Chi$; hence it is invertible.
  The same argument as in Fact~\ref{fact:degrees} shows that $A_a$ is
  a central simple $F_a$-algebra of degree $m$.

  Finally, \textup{(S2)} gives $\deg_\Chi(\pi_a)=\deg_\Chi(\pi)=s$.
  Applying the same argument as in Fact~\ref{fact:degrees} in the
  specialized ring $R_a$ yields that every irreducible factor of $g_a$
  has degree $s d_a$ in $x$. \qed
\end{proof}

The integer $d_a=\ind(A_a)$ is, in principle, decidable by known
cyclic-algebra methods over number fields. In particular,
\citet{Han07} reduces splitting and isomorphism questions for cyclic
algebras to field-theoretic tasks including norm equations.
Specializing $\Tau\mapsto a$ keeps the computation in characteristic
zero, but moves it to a number field where these methods apply.

\section{Good prime reductions of a specialization}
\label{sec:goodreds}

After a good characteristic-zero specialization, the second modular
level reduces the specialized problem to finite fields, where we have
provably effective and efficient algorithms.

We assume throughout this section that we have fixed a good
specialization $a\in\OK$ and the number field $\L_a/\K$.  For the
modular step we therefore choose primes for which the reduction of
$\theta_a$ still has degree $m$.

\begin{definition}
  \label{def:goodprime}
  A nonzero prime ideal $\q\subset \OK$ is called \boldemph{good
    for the specialization $a$} if all of the following hold:
  \begin{enumerate}[label=\textup{(P\arabic*)},leftmargin=2.4em]
  \item the residue characteristic of $\q$ is prime to $m$, and
    $\q$ is unramified and inert in $\L_a/\K$;
  \item $\q$ avoids the denominators, leading coefficients, and
    discriminants arising in $g_a$, $u_a$, and $\pi_a$;
  \item the reduction $\bar\pi_{a,\q}$ of $\pi_a$ modulo $\q$ is
    irreducible in $k_{\q}[\Chi]$ and satisfies
    $\bar\pi_{a,\q}\neq \Chi$, where $k_{\q}=\OK/\q$.
\end{enumerate}
\end{definition}

Correctness of the returned factors uses only primes satisfying
Definition~\ref{def:goodprime}. The search for such primes is
therefore part of the heuristic modular stage.  If no such primes are
found within the prescribed budget, the stage returns failure.

If $\q$ is inert, there is a unique prime $\p$ of $\L_a$ above it, and
the residue field $\lambda_{a,\q}=\OLa/\p$ has degree $m$ over
$k_{\q}$. Since the residue characteristic is prime to $m$, the
reduction of $\omega$ still has order $m$. The automorphism $\sigma_a$
preserves $\p$ and therefore descends to an automorphism
$\bar\sigma_{a,\q}\in \Gal(\lambda_{a,\q}/k_{\q})$ of order $m$. Thus
we obtain a finite-field skew polynomial ring
$\bar \R_{a,\q}=\lambda_{a,\q}[x;\bar\sigma_{a,\q}]$.  Let
$\bar g_{a,\q}$ denote the reduction of $g_a$.

\begin{proposition}
  \label{prop:splitmod}
  Let $a$ be good for $(g,\pi)$ and let $\q$ be good for $a$. Then:
  \begin{enumerate}[label=\textup{(\alph*)},leftmargin=2.2em]
  \item the monic bound of $\bar g_{a,\q}$ is the reduction
    $\bar\pi_{a,\q}$ of $\pi_a$;
  \item every irreducible factor of $\bar g_{a,\q}$ in
    $\bar \R_{a,\q}$ has degree $s$.
  \end{enumerate}
\end{proposition}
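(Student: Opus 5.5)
We mirror the proof of Proposition~\ref{prop:specialized}, now one level down, and then use the fact that central simple algebras over finite fields are split (Wedderburn). The plan has two steps, matching parts (a) and (b).

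\textbf{Part (a).} Condition (P2) lets us reduce the identity $\pi_a=u_ag_a$ coefficientwise modulo $\p$. Reduction modulo $\p$ is a ring homomorphism $\OLa[x;\sigma_a]\to\bar\R_{a,\q}$, because $\sigma_a$ preserves $\p$ and induces $\bar\sigma_{a,\q}$. Note that $\OLa$ may have to be replaced by the localization at $\p$ of the relevant coefficient ring. This gives
\[
  \bar\pi_{a,\q}=\bar u_{a,\q}\,\bar g_{a,\q}.
\]
Next we identify the centre. Since $\bar\sigma_{a,\q}$ has order exactly $m$ with fixed field $k_\q$, the centre of $\bar\R_{a,\q}$ is $k_\q[x^m]=k_\q[\Chi]$, by the same fixed-field argument used for $R_a$. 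Hence $\bar\pi_{a,\q}$ is a central left multiple of $\bar g_{a,\q}$, and so the monic bound of $\bar g_{a,\q}$ divides $\bar\pi_{a,\q}$. The reduction $\bar g_{a,\q}$ is monic of the same positive $x$-degree, so its bound is nonconstant. By (P3), $\bar\pi_{a,\q}$ is irreducible and different from $\Chi$, so the bound equals $\bar\pi_{a,\q}$. Condition (P2) also ensures $\deg_\Chi\bar\pi_{a,\q}=s$, since the leading coefficient does not vanish. The main point needing care is the order of $\bar\sigma_{a,\q}$. Inertness gives $[\lambda_{a,\q}:k_\q]=m$. Moreover $\bar\sigma_{a,\q}(\bar\theta_a)=\bar\omega\bar\theta_a$, where $\bar\omega$ has order $m$ because the residue characteristic is prime to $m$, and $\bar\theta_a\neq0$ because $a\notin\q$ (a denominator or discriminant condition in (P2)). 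So $\bar\sigma_{a,\q}$ has order $m$.

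\textbf{Part (b).} Set $\bar F=k_\q[\Chi]/(\bar\pi_{a,\q})$, a finite field of degree $s$ over $k_\q$, and $\bar A=\bar\R_{a,\q}/\bar\R_{a,\q}\bar\pi_{a,\q}$. The image of $\Chi$ in $\bar F$ is invertible since $\bar\pi_{a,\q}\neq\Chi$. Then, exactly as in Fact~\ref{fact:degrees} and Proposition~\ref{prop:specialized}, $\bar A$ is a central simple $\bar F$-algebra of degree $m$; concretely, it is the cyclic algebra $(\lambda_{a,\q}\otimes_{k_\q}\bar F/\bar F,\bar\sigma,\Chi)$. Because $\bar F$ is finite, Wedderburn's little theorem gives $\Br(\bar F)=0$, so $\bar A\cong M_m(\bar F)$ and the index is $d=1$.

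Now apply the degree argument of Fact~\ref{fact:degrees} in the ring $\bar\R_{a,\q}$. Every nonconstant right factor of $\bar g_{a,\q}$ has bound $\bar\pi_{a,\q}$, which is irreducible of $\Chi$-degree $s$. Therefore every irreducible factor has $x$-degree $s\cdot d=s$.

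The only genuinely delicate step is justifying that Fact~\ref{fact:degrees} transfers verbatim to the finite-field skew ring. This requires the centre computation and the simplicity of $\bar A$, both of which rest on $\bar\sigma_{a,\q}$ having order exactly $m$ as established above. Once that holds, the standard Jacobson theory \citep{Jac43,GieZha03} applies unchanged.
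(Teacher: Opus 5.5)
Your proposal is correct and follows the paper's proof essentially step for step. You reduce $\pi_a=u_ag_a$ modulo $\q$ and use \textup{(P3)} to force the monic bound of $\bar g_{a,\q}$ to be $\bar\pi_{a,\q}$. You then use $\Br(\bar F_{a,\q})=0$ for the finite field $\bar F_{a,\q}$ to get index $1$, and apply Fact~\ref{fact:degrees}.

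One justification needs repair. \textup{(P1)} does not force $a\notin\q$: for $m=2$ and $a=18$, the prime $3$ is inert in $\QQ(\sqrt{18})=\QQ(\sqrt2)$, yet $\bar\theta_a=0$. \textup{(P2)} does not obviously exclude this either. So do not argue via $\bar\theta_a\neq0$; get the order $m$ of $\bar\sigma_{a,\q}$ from \textup{(P1)} alone. For an unramified inert $\q$, the decomposition group of $\p$ is all of $\Gal(\L_a/\K)=\langle\sigma_a\rangle$, and it maps isomorphically onto $\Gal(\lambda_{a,\q}/k_\q)$.
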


\begin{proof}
  Reducing the identity $\pi_a=u_ag_a$ modulo $\q$ gives
  $\bar\pi_{a,\q}=\bar u_{a,\q}\,\bar g_{a,\q}$.  Thus
  $\bar\pi_{a,\q}$ is a central multiple of $\bar g_{a,\q}$, so the
  monic bound of $\bar g_{a,\q}$ divides $\bar\pi_{a,\q}$ in
  $k_{\q}[\Chi]$. Since $\bar g_{a,\q}$ is still monic and
  nonconstant, its bound is nonconstant. Because $\bar\pi_{a,\q}$ is
  irreducible and $\bar\pi_{a,\q}\neq \Chi$ by \textup{(P3)}, it
  follows that the monic bound of $\bar g_{a,\q}$ is
  $\bar\pi_{a,\q}$. This proves \textup{(a)}.
  
  For \textup{(b)}, the same argument used in
  Proposition~\ref{prop:specialized} shows that the quotient
  $\bar A_{a,\q}=\bar \R_{a,\q}/\bar \R_{a,\q}\bar\pi_{a,\q}$ is a
  central simple algebra over the finite field
  $\bar F_{a,\q}:=k_{\q}[\Chi]/(\bar\pi_{a,\q})$.  Since
  $\Br(\bar F_{a,\q})=0$, this algebra is split, so its index is $1$.
  Fact~\ref{fact:degrees} in the specialized finite-field setting
  therefore gives factor degree $s$. \qed
\end{proof}

Proposition~\ref{prop:splitmod} explains why the modular step is useful.
At a good prime, the specialized quotient becomes a split algebra over a
finite field, where skew-polynomial factorization is efficient. The price
is that the modular factorization is usually finer than the desired
factorization in characteristic zero, so the modular factors must be
recombined and then verified in the original ring.

\subsection{Factoring a specialization by inert prime reductions}

For fixed $a$, we factor $g_a$ over $\L_a[x;\sigma_a]$ by repeatedly
reducing to $\bar \R_{a,\q}$, factoring there, and lifting blocks of degree $sd_a$.

For the algorithms below, a \emph{search budget} means a prescribed
finite range of data to be tried.  For a fixed specialization $a$ we
write $\mathcal B_a=(M_a,Q_a,C_a)$ for a finite inert-prime and
recombination budget: only prime ideals $\q\subset\OK$ with
$\Norm_{\K/\QQ}(\q)\le Q_a$ are tested, at most $M_a$ good inert
primes are used, and at most $C_a$ compatible modular block
collections are passed to Chinese remaindering and rational
reconstruction.  ``Exhausted'' in
Algorithm~\ref{alg:onespecialization} means exhausted within this
prescribed finite range; it does not mean exhaustion of all possible
good inert primes.

\begin{algorithm}[H]
\caption{Factoring one specialization}
\label{alg:onespecialization}
\textbf{Input:} a good specialization $a$ for an irreducible-bound
input $g$, the block size $d_a=\ind(A_a)$, and a finite budget
$\mathcal B_a=(M_a,Q_a,C_a)$.

\textbf{Output:} a factorization of $g_a$ in
$\L_a[x;\sigma_a]$, or failure after $\mathcal B_a$ is exhausted.

\begin{enumerate}[label=\textup{(\arabic*)},leftmargin=2.2em]
\item Search the prescribed prime range $\Norm_{\K/\QQ}(\q)\le Q_a$
  and choose at most $M_a$ good inert primes $\q_1,\dots,\q_r$ of $\K$
  for the specialization $a$.
\item For each $\q_i$:
  \begin{enumerate}[label=\textup{(\alph*)},leftmargin=2em]
    \item construct the finite field $\lambda_{a,\q_i}$ and the reduced skew ring $\bar \R_{a,\q_i}$;
    \item factor $\bar g_{a,\q_i}$ completely by \citet{CarLeb17} or
      \citet{Gie98};
    \item up to the cap $C_a$, try ordered recombinations of the
      degree-$s$ modular factors, and keep those that give candidate
      monic right divisors of degree $s d_a$;
  \end{enumerate}
\item For each collection of modular blocks that passes the matching
  tests, and after determining a common denominator for the candidate
  coefficients, use Chinese remaindering across the primes
  $\p_i\subset \OLa$ above $\q_i$ to reconstruct candidate coefficient
  data in an integral basis of~$\OLa$.
\item Recover a monic candidate factor $\widehat h_a\in \L_a[x;\sigma_a]$
  of degree $sd_a$ by rational reconstruction in $\L_a$, and certify
  it by right division of $g_a$.

\item If certification succeeds, then $\widehat h_a$ is irreducible:
  by Proposition~\ref{prop:specialized}, every irreducible factor of
  $g_a$ has $x$-degree $sd_a$, while
  $\deg_x(\widehat h_a)=sd_a$. Recurse on the cofactor, and
  continue until $g_a$ is completely factored.

\item If no candidate survives among the primes and block collections
  allowed by $\mathcal B_a$, return failure.
  
\end{enumerate}
\end{algorithm}

Algorithm~\ref{alg:onespecialization} is a bounded search with exact
checking. The finite-field factorizations are fast and exact, but the
search may miss the recombinations needed over $\L_a$. In particular,
Step~\textup{(2c)} only proposes candidate blocks from the modular
factors; it does not claim that one fixed finite-field factorization
lists all right divisors of degree $sd_a$. Every accepted specialized
factor is checked by division in the specialized ring.

\section{Lifting from several specializations back to $\K(t)$}
\label{sec:lift}

The finite-field factorizations obtained from several specializations
must then be lifted back to $\K(t)$.  Assume now that for several good
values $a_1,a_2,\dots,a_r\in \OK$ we have obtained specialized factors
of $g_{a_i}$ in $\L_{a_i}[x;\sigma_{a_i}]$.

Write a candidate specialized factor in the basis
$1,\theta_{a_i},\dots,\theta_{a_i}^{m-1}$:
\[
  h_{a_i}=\sum_{\ell=0}^{e}\left(\sum_{j=0}^{m-1} c_{\ell
      j}(a_i)\theta_{a_i}^j\right)x^{\ell}, \qquad c_{\ell j}(a_i)\in
  \K,
\]
where $e$ is the candidate degree over $\K(t)$. A candidate factor
over $\K(t)$ has the form
\[
  h=\sum_{\ell=0}^{e}\left(\sum_{j=0}^{m-1} c_{\ell
      j}(\Tau)t^j\right)x^{\ell}, \qquad \mbox{where\ \ } c_{\ell j}(\Tau)\in \K(\Tau).
\]
Thus, each good specialization supplies the values
\[
  c_{\ell j}(\Tau)\big|_{\Tau=a_i}=c_{\ell j}(a_i).
\]
This is now a \emph{commutative} interpolation problem over $\K$.

\begin{definition}
  \label{def:Bcompat}
  For fixed degree $e$, a family of specialized factors
  $h_{a_1},\ldots,h_{a_r}$ is called \boldemph{$B$-compatible} if, for
  every lower coefficient pair $(\ell,j)$, with $\ell<e$ and
  $0\leq j<m$, the data
  $(a_1,c_{\ell j}(a_1)),\dots,(a_r,c_{\ell j}(a_r))$ interpolate a
  rational function in $\K(\Tau)$ with numerator and denominator
  degree at most $B$.
\end{definition}

Here $B$ is the interpolation degree bound in $\Tau$: it bounds the
degrees of the numerator and denominator of each rational coefficient
function. It is not a coefficient-height bound.  For
Algorithm~\ref{alg:lift} we write
$\mathcal B_{\rm lift}=(r_{\max},B,C_{\rm loc},C_{\rm tup})$, where
$r_{\max}$ bounds the number of specializations used, $C_{\rm loc}$
bounds the number of local degree-$e$ candidates retained from each
specialized factorization, and $C_{\rm tup}$ bounds the number of
$B$-compatible tuples tested.

\begin{algorithm}[H]
\caption{Lifting from specializations}
\label{alg:lift}

\textbf{Input:} an irreducible-bound input $g$, specialized
factorizations of $g_{a_1},\dots,g_{a_r}$, and a finite lifting budget
$\mathcal B_{\rm lift}=(r_{\max},B,C_{\rm loc},C_{\rm tup})$.

\textbf{Output:} a factorization of $g$ in $\R$, or failure
after $\mathcal B_{\rm lift}$ is exhausted.

\begin{enumerate}[label=\textup{(\arabic*)},leftmargin=2.2em]
  \item For each divisor $d$ of $m$ with $sd\mid n_g$ and $sd<n_g$, set $e=sd$.

  \item For each of at most $r_{\max}$ retained specializations $a_i$,
    build a candidate list of at most $C_{\rm loc}$ monic degree-$e$
    right divisors of $g_{a_i}$ from selected products of the
    irreducible factors found in Algorithm~\ref{alg:onespecialization}
    (noting that a specialized divisor of degree $e$ must be a product
    of $e/(sd_{a_i})$ specialized irreducible factors, so we only use
    $a_i$ with $sd_{a_i}\mid e$).

\item Search, among the retained specializations for this $e$, through
  at most $C_{\rm tup}$ tuples of degree-$e$ divisors
  $(h_{a_1},\dots,h_{a_r})$ that are $B$-compatible in the sense of
  Definition~\ref{def:Bcompat}.

  \item For each $B$-compatible tuple, choose a rational
    reconstruction of the lower coefficient functions, that is,
    functions $\widehat c_{\ell j}(\Tau)\in \K(\Tau)$ for $\ell<e$ and
    $0\le j<m$, each with numerator and denominator degree at most
    $B$, whose values at the used specializations agree with the data
    $c_{\ell j}(a_i)$. From these, reconstruct a monic candidate
    \tightdisplay{
      \widehat h = x^e +
      \sum_{\ell=0}^{e-1}
      \left(
        \sum_{j=0}^{m-1}\widehat c_{\ell j}(\Tau)t^j
      \right)x^\ell
      \in \R.
    }
  \item Compute the right remainder $\rrem(g,\widehat h)$ in
    $\R$.
  \item If the remainder is zero, compute the cofactor $q$ with
    $g=q\widehat h$, recurse on $q$ and on $\widehat h$ until the
    current factor is completely factored, and return the resulting
    factorization.

  \item If no candidate survives certification for any admissible
    degree $e$ among the specializations, local candidate blocks,
    compatible tuples, and interpolation degree bound allowed by
    $\mathcal B_{\rm lift}$, return failure.
\end{enumerate}
\end{algorithm}

Algorithm~\ref{alg:lift} is the characteristic-zero analogue of
combining modular factors in commutative factorization. The difference
is that the local factors come from \emph{two} modular levels:
inert-prime factorization inside each specialization, and then
interpolation across several characteristic-zero specializations.  As
in Algorithm~\ref{alg:onespecialization}, Step~\textup{(2)} only
proposes candidates from the specialized factors.  It does not claim
that one fixed specialized factorization lists every right divisor of
degree $e$.

\section{A two-level algorithm for factorization over
  $\K(t)$}
\label{sec:alg}

We can now collect the previous stages into one procedure
over $\K(t)$.

The global budget for Algorithm~\ref{alg:full} is a finite collection
$\mathcal B=(\mathcal B_{\rm spec},\{\mathcal B_a\},\mathcal B_{\rm
  lift})$.  Here $\mathcal B_{\rm spec}$ is a finite range of central
specializations $a$ to test, $\mathcal B_a$ is the inert-prime and
recombination budget used by Algorithm~\ref{alg:onespecialization} for
that $a$, and $\mathcal B_{\rm lift}$ is the interpolation and
reconstruction budget used by Algorithm~\ref{alg:lift}. A larger
search means running the same procedure with larger finite
budget parameters; correctness of accepted factors is independent of
these choices.

\begin{algorithm}[H]
\caption{Two-level modular factorization in the quantum plane}
\label{alg:full}

\textbf{Input:} a nonzero skew polynomial $F_{\rm in}\in \R$ and a
finite global search budget $\mathcal B$.

\textbf{Output:} either a unit $\lambda\in\K(t)^\times$ and a complete
factorization
\[
  F_{\rm in}=\lambda h_1h_2\cdots h_r
\]
in $\K(t)[x;\sigma]$, or failure after $\mathcal B$ is exhausted.

\begin{enumerate}[label=\textup{(\arabic*)},leftmargin=2.2em]
\item Replace the input by a convenient monic left associate.  Clear
  denominators and remove scalar content to obtain
  $F\in\K[t][x;\sigma]$ and a unit $\gamma\in\K(t)^\times$ with
  $F=\gamma F_{\rm in}$.  Let $\ell=\operatorname{lc}_x(F)$ and set
  $f=F^\sharp=\ell^{-1}F$.  Record $\lambda=\gamma^{-1}\ell$, so that
  $F_{\rm in}=\lambda f$.  From this point on, run the factorization
  stages on the monic polynomial $f$.
  
\item Apply Algorithm~\ref{alg:central} to split $f$ into factors with
  central-irreducible bounds.
\item For each irreducible-bound factor $g$ of $f$:
  \begin{enumerate}[label=\textup{(\alph*)},leftmargin=2em]
  \item Check the conditions of Corollary~\ref{cor:maxdeg} (degree of
    irreducible bound) and Proposition~\ref{prop:normprefilter}
    (irreducible norm); if either test proves irreducibility, keep $g$
    and continue to the next current factor.
  \item otherwise choose good central specializations $a_1,a_2,\dots$
    from the finite range prescribed by $\mathcal B_{\rm spec}$;
  \item for each $a_i$, compute the specialized block size
    $d_{a_i}=\ind(A_{a_i})$ using cyclic-algebra methods over number
    fields;
  \item attempt to factor $g_{a_i}$ by
    Algorithm~\ref{alg:onespecialization} using its assigned budget
    $\mathcal B_{a_i}$;
  \item from any successful specialized factorizations, apply
    Algorithm~\ref{alg:lift} with budget $\mathcal B_{\rm lift}$ to
    reconstruct factors of $g$ over $\K(t)$.
  \end{enumerate}

\item If every current factor has been proved irreducible or
  completely factored, return $\lambda$ together with the resulting
  decomposition of $f$; otherwise return failure after the finite
  budget $\mathcal B$ has been exhausted.

\end{enumerate}
\end{algorithm}

\begin{theorem}
  Every factorization returned by Algorithm~\ref{alg:full} is correct:
  if the algorithm returns $\lambda,h_1,\ldots,h_r$, then
  \[
    F_{\rm in}=\lambda h_1h_2\cdots h_r
  \]
  in $\K(t)[x;\sigma]$, and every factor declared irreducible is
  either the degree-one factor $x$ or has been certified by
  one of the stated irreducibility tests.
\end{theorem}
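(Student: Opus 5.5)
The argument is by bookkeeping: I will show that every step of Algorithm~\ref{alg:full} that can change the current product decomposition preserves an exact identity in $\R$, and that each piece is labelled irreducible only through one of the listed certificates. The invariant is that at every moment the current list of factors $g_1,\dots,g_k$ satisfies $F_{\rm in}=\lambda g_1\cdots g_k$ in $\R$. None of the heuristic parts can break this invariant, because a candidate enters the list only after an exact right division. So the budgets, the choice of specializations, the inert primes, and the $B$-compatibility do not matter for correctness. Under failure the theorem is vacuous, so we may assume the algorithm returns.

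Step (1) establishes the invariant. We have $F=\gamma F_{\rm in}$, $f=\ell^{-1}F$, and $\lambda=\gamma^{-1}\ell$. Since the multiplications by elements of $\K(t)^\times$ are left multiplications, $F_{\rm in}=\gamma^{-1}F=\gamma^{-1}\ell f=\lambda f$ holds in $\R$. Step (2) is Algorithm~\ref{alg:central}. By Proposition~\ref{prop:centralcorrect} it terminates, and each return assembles $f=u_1\cdots u_r v x^\varepsilon$ from exact quotients $f=f_0x^\varepsilon$ and $f_0=uv$. An induction on $\deg_x$ gives $f=g_1\cdots g_\ell$ exactly. The explicit copies of $x$ are irreducible because they have degree one, and $\R$ is a domain in which degrees add, so no factorization of $x$ into two nonunits exists.

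Step (3) treats each remaining $g$ in one of two ways.
\begin{enumerate}
\item If Corollary~\ref{cor:maxdeg} or Proposition~\ref{prop:normprefilter} fires, $g$ is kept and declared irreducible. This is exactly the second clause of the theorem.
\item Otherwise $g$ is replaced only inside Algorithm~\ref{alg:lift}, at Step~(6). There the candidate $\widehat h\in\R$ has $\rrem(g,\widehat h)=0$, computed in $\R$ itself rather than in a specialization, so $g=q\widehat h$ exactly. Substituting this into the product preserves the invariant. Recursion on $q$ and $\widehat h$ strictly lowers the $x$-degree because $0<e<n_g$. By induction, the recursively returned factorizations multiply back to $q$ and to $\widehat h$.
\end{enumerate}

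The specialized factorizations produced by Algorithm~\ref{alg:onespecialization} are used only to propose candidates. Their own certification in $\L_a[x;\sigma_a]$ is not relied on for the claim over $\K(t)$.

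The main obstacle is the second clause: deciding what counts as an irreducibility certificate for a leaf of the recursion produced in Algorithm~\ref{alg:lift}. A factor there is ``completely factored'' only if the recursion ends in pieces certified by Corollary~\ref{cor:maxdeg}, by Proposition~\ref{prop:normprefilter}, or as $x$. Failing to find further candidates within the budget does not prove irreducibility, so such a piece leads to a return of failure in Step~(4). I will make this explicit by reading Step~(4) as follows: a factor counts as complete only when every leaf carries one of these certificates. Any recursive call on a piece of $g$ re-enters Step~(3)(a), so these certificates are always applied to every leaf. The theorem then follows by structural induction on the recursion tree.
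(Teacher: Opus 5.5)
Your proposal is correct and is essentially the paper's own bookkeeping argument. The normalization records $F_{\rm in}=\lambda f$, every replacement of a factor (in Algorithm~\ref{alg:central} and at Step~(6) of Algorithm~\ref{alg:lift}) is backed by an exact right division in $\R$, and irreducibility over $\K(t)$ is declared only for copies of $x$ or via Corollary~\ref{cor:maxdeg} and Proposition~\ref{prop:normprefilter}. Your explicit invariant, and your reading that every leaf of the lifting recursion must re-enter Step~(3)(a) to be certified, make precise a point the paper's proof passes over; you are also right that the specialized certifications in $\L_a[x;\sigma_a]$ are not needed for the claim over $\K(t)$.
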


\begin{proof}
  The normalization step records $F_{\rm in}=\lambda f$, where
  $\lambda\in\K(t)^\times$ is a unit and $f$ is monic.  All subsequent
  stages operate on this monic associate $f$. At the central stage,
  every non-$x$ factor is certified by right gcd computations, and the
  stripped copies of $x$ are irreducible by degree. The irreducibility
  declarations made before specialization are certified by verifying
  Corollary~\ref{cor:maxdeg} or Proposition~\ref{prop:normprefilter}.
  In the modular stage, specialized factors are certified by right
  division in the specialized rings, and lifted factors are certified
  by right division in the original ring $\K(t)[x;\sigma]$. Therefore
  the algorithm may fail to find a factorization, but any
  decomposition $f=h_1\cdots h_r$ that it returns is correct for the
  monic associate.  Multiplying on the left by the recorded unit
  $\lambda$ gives the asserted identity for $F_{\rm in}$. \qed
\end{proof}

No completeness is claimed for the modular reconstruction stage.
Algorithm~\ref{alg:full} searches for factors by trying bounded
families of specializations, finite primes, and recombinations. If no
candidate passes exact division in $\K(t)[x;\sigma]$, the algorithm
returns failure.

\section{Complexity and practical remarks}
\label{sec:complexity}

Only part of our method has been analyzed. A full bit-complexity
discussion must also include the coefficient height of the
input. Let $H$ denote a height bound for the coefficients of $f$ when
they are written in the basis $1,t,\dots,t^{m-1}$ over $\K(\Tau)$.

The costs identified are the following:
\begin{enumerate}[topsep=2pt,label=\textup{(C\arabic*)},leftmargin=2.4em]
\item Computing the bound and the reducible-bound stage are
  polynomial-time in the algebraic operations model, and should be
  effective in the bit model assuming commutative
  factorization over number fields (we do not do this analysis here).

\item The maximal-bound-degree test is negligible once the bound is
  known.

\item For each good specialization $a$, computing the specialized
  block size $d_a=\ind(A_a)$ invokes cyclic-algebra algorithms over
  number fields.  This step is effective, but the cost is not analyzed
  here.
  
\item For fixed specialization $a$ and inert prime $\q$, the
  factorization of the finite-field image of degree $n$ is handled and
  fully analyzed in \citet{CarLeb17} and \citet{Gie98}. This is the
  fastest stage of the algorithm.

\item Reconstruction over one specialization uses Chinese remaindering
  in the Dedekind domain $\OLa$ and linear algebra in an integral
  basis of $\L_a$.

\item Lifting from several specializations uses rational interpolation over $\K$ in the single central variable $\Tau$.

\end{enumerate}

The difficult part is the search. In the heuristic
algorithms above, the search is deliberately split into finite budgets:
\begin{enumerate}[label=\textup{(S\arabic*)},leftmargin=2.4em]
\item the central-specialization budget $\mathcal B_{\rm spec}$, which
  bounds the values $a$ tested;
\item the inert-prime budget inside each specialization, which bounds
  the prime norms and the number of finite-field reductions;
\item the modular block-recombination budget, which bounds the number
  of candidate block collections tried before Chinese remaindering and
  rational reconstruction over $\L_a$;
\item the lifting budget, including the interpolation degree bound $B$
  and the number of $B$-compatible tuples tried across
  specializations.
\end{enumerate}
These finite budgets make ``failure'' a possible outcome of the chosen
search range. They do not enumerate the many, possibly infinitely
many, right divisors of an irreducible-bound input.  They test only
finitely many candidate tuples. Certification, not exhaustion of the
search space, is what makes accepted factors correct.  We make no
complexity claim for the irreducible-bound stage over $\K(t)$.

The practical appeal is that the centre is a univariate PID, central
specializations $\Tau\mapsto a$ are easy to sample, and much of the
noncommutative work can be pushed into finite-field images where fast
algorithms already exist.

\section{Historical context and related work}
\label{sec:history}

Our strategy combines four earlier viewpoints: the centre-and-bound
methods of \citet{GieZha03} and \citet{GomLob19}, the root-of-unity
quantum-plane studied by \citet{CouPri06}, norm methods in
finite-order skew-polynomial rings \citep{PumTho22}, and algorithms
for cyclic and central simple algebras over number fields
\citep{Han07,IvaRon12}. A purely finite-field reduction loses some of
the structure attached to an irreducible bound. We therefore first
specialize in characteristic zero, where this structure is still
available, and only then reduce modulo good primes to obtain fast
finite-field factorizations.

The polynomial quantum plane can also be treated by the noncommutative
subsystem \textsc{Plural} of \textsc{Singular} \citep{Lev06}. That
system provides PBW-type normal forms, noncommutative Gr\"obner bases,
elimination, and related algorithms for many noncommutative polynomial
algebras.

The \texttt{ncfactor.lib} library of \citet{LevHei18} implements
ansatz-driven factorization methods for several such algebras,
building on finite-factorization-domain results of \citet{BelHei17}
for $G$-algebras and related filtered algebras.  This work is
therefore important related software, especially for polynomial
quantum-plane computations.  The problem considered here is different
in two respects.  First, we factor in the localized Ore ring
$\K(t)[x;\sigma]$, not only in the polynomial $G$-algebra
$\K\langle t,x\rangle/(xt-\omega tx)$, so denominators in $t$ may
occur in factors.  Second, the algorithm exploits the root-of-unity
centre $\K(t^m)[x^m]$, through the additional structure of central
bounds, right-gcd splitting, irreducible-bound cyclic algebras, good
specialization, inert-prime finite-field reduction, modular
recombination, and interpolation.

\section{What is presently algorithmic over $\K(t)$?}
\label{sec:algstatus}

We now separate the parts that are currently computable from the parts
that are only searchable or remain open over $\K(t)$.

\begin{proposition}
\label{prop:algstatus}
The following tasks are computable (without claim of efficiency):
\begin{enumerate}[label=\textup{(\alph*)},leftmargin=2.2em]
\item Normalization of the input to a primitive integral representative
  and a recorded monic associate, computation of the monic bound
  $\varphi$, commutative factorization of $\varphi$ in
  $\C=\K(\Tau)[\Chi]$, and the rough factorization of the working monic
  polynomial $f$ into central-irreducible pieces by right gcd
  computations.

\item The maximal-bound-degree test of Corollary~\ref{cor:maxdeg} and
  the norm-based irreducibility test of Proposition~\ref{prop:normprefilter}.

\item For each chosen good specialization $\Tau\mapsto a$ and each
  finite set of chosen good inert primes $\q_1,\ldots,\q_r$, the
  construction of the specialized skew ring and of its finite-field
  images, finite-field skew factorization, recombination of modular
  factors, Chinese remaindering over the primes
  $\p_i\subset \OLa$ above the $\q_i$, rational reconstruction in
  $\L_a$, and certification of any reconstructed factor by right
  division.

\item Rational interpolation of candidate coefficients from finitely
  many characteristic-zero specializations, followed by verification
  in the original ring.
\end{enumerate}
\end{proposition}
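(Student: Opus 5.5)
The approach is to go through items (a)--(d) in turn. For each task I will either exhibit a finite procedure built from exact arithmetic in $\K=\QQ(\omega)$, in $\K(t)$ and $\K(\Tau)$, in number fields $\L_a$, and in finite fields, or point to an earlier result or standard commutative algorithm that supplies one. Throughout, elements of $\R$ are represented exactly, with coefficients in $\K(t)=\bigoplus_j\K(\Tau)t^j$. Arithmetic in $\R$ is effective, since commuting $x$ past a coefficient only requires applying $\sigma$, i.e.\ substituting $t\mapsto\omega t$. Right Euclidean division is also effective, because $\K(t)$ is a computable field and leading coefficients can be inverted.

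For (a), normalization is plain bookkeeping: take a common denominator in $\K[t]$, compute the content by gcds in $\K[t]$, and divide by $\ell=\operatorname{lc}_x(F)$. To compute the monic bound $\varphi$, I will use the fact that $\R$ is free of rank $m^2$ over $\C$ and search degree by degree, $k=1,2,\dots$, for a monic $\varphi=\sum_{i\le k}c_i\Chi^i$ with $\rrem(\varphi,f)=0$. Each such condition is a linear system over $\K(\Tau)$ in the unknowns $c_i$, and it is solvable by exact linear algebra. The search terminates because $\Norm(f)$ is a central left multiple (or alternatively by the dimension count $\dim_{\K(t)}\R/\R f=\deg_x f$). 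Factoring $\varphi$ in $\K(\Tau)[\Chi]$ is bivariate factorization over a number field, which is classical (\citealp{GatGer13}). With the bound and its factorization available, the rough factorization is exactly Algorithm~\ref{alg:central}, which terminates by Proposition~\ref{prop:centralcorrect}; each of its steps (stripping powers of $x$, $\gcrd$, right division) is a Euclidean computation in $\R$.

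For (b), Corollary~\ref{cor:maxdeg} needs only the degree comparison once $\pi$ is known. For Proposition~\ref{prop:normprefilter}, I will build $M_g$ by reducing modulo $x^m=\Chi$ as described in Section~\ref{subsec:norms}. The next step is to compute $\det M_g$ over $\Lgen$ and rewrite it in $\C$, which is possible by \citet{PumTho22}. It remains to take the monic associate and test irreducibility by the commutative factorization used in (a).

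For (c), the goodness conditions are exact tests: (S1)--(S4) reduce to evaluation, discriminants, and irreducibility tests over $\K$, and (P1)--(P3) reduce to factoring $y^m-a$ and $\pi_a$ modulo $\q$ and checking that $\q$ avoids finitely many elements. The finite fields $\lambda_{a,\q}$ and the maps $\bar\sigma_{a,\q}$ can be constructed explicitly, and factorization over them is given by \citet{Gie98,CarLeb17}. Recombination ranges over a finite set bounded by the budget. Chinese remaindering in $\OLa$ and rational reconstruction come down to lattice and linear algebra over $\ZZ$. Certification is right division in $\L_a[x;\sigma_a]$. Item (d) is rational interpolation over $\K$, which is Padé/linear algebra in $\Tau$, followed by $\rrem$ in $\R$. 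I expect the main subtlety to be item (a)'s bound computation and the justification that the norm lies in $\C$; everything else is finite exact arithmetic. The claim concerns only computability of each listed task, not completeness of the search, so no further argument is needed.
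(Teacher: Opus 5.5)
Your proposal is correct and follows essentially the same route as the paper: both argue item by item that each task is immediate from the constructions of Sections~\ref{sec:centraldecomp}--\ref{sec:lift} together with the cited commutative and finite-field routines. The only difference is emphasis. You spell out the linear-algebra computation of the bound and why it terminates, whereas the paper puts its explicit detail into part (c). There it notes that each inert $\q_i$ has a unique prime $\p_i$ above it and that these primes are pairwise coprime, so the Chinese remainder isomorphism $\OLa/\prod_i\p_i\simeq\prod_i\OLa/\p_i$ can be applied coefficientwise in an integral basis.
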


\begin{proof}[Discussion]
  Parts~\textup{(a)}--\textup{(d)} are immediate from the constructions
  in Sections~\ref{sec:centraldecomp}--\ref{sec:lift}, together with
  the cited commutative and finite-field skew-factorization routines.
  For part~\textup{(c)}, if $\q_i$ is inert in $\L_a/\K$, then there is
  a unique prime $\p_i\subset \OLa$ above $\q_i$.  For distinct
  $\q_i$ the ideals $\p_i$ are pairwise coprime, and the Chinese
  remainder theorem in the Dedekind domain $\OLa$ gives an effective
  isomorphism
  \[
    \OLa/\prod_i \p_i \simeq \prod_i \OLa/\p_i .
  \]
  Using a fixed integral basis of $\OLa$, this reconstructs candidate
  integral coefficient data coefficientwise from the finite-field
  images.  After common denominator clearing, rational reconstruction
  in $\L_a$ produces candidate coefficients in the specialized ring
  $\L_a[x;\sigma_a]$.  These steps only propose possible factors.
  Every proposed factor is checked afterwards by exact division in
  $\L_a[x;\sigma_a]$, and any lifted factor is checked by exact
  division in $\R$.  \qed
\end{proof}

Because $\K(t)$ is countable and effectively presented, we can
enumerate all monic candidates of smaller $x$-degree in
$\K(t)[x;\sigma]$ and test them by right division. Thus a search for a
proper factor certifies reducibility when it succeeds. To obtain a
full decision procedure, we also need a way to certify irreducibility.

\subsection{The remaining difficulties over $\K(t)$}

After the reductions above, the part not covered by the certified
tests is the irreducible-bound case over the generic field $\K(\Tau)$,
as made precise in the following proposition.

\begin{proposition}
  \label{prop:conditionaldec}
  Assume that for every monic irreducible-bound input
  $g\in \K(t)[x;\sigma]$ with monic bound $\pi\in \K(\Tau)[\Chi]$,
  $\pi\neq \Chi$, there is an algorithm which constructs the
  finite-dimensional algebra
  \[
    E_g:=\operatorname{End}_{A_\pi}(M_g), \qquad
    \F_\pi=\K(\Tau)[\Chi]/(\pi), \qquad
    A_\pi=\R/\R\pi, \qquad M_g=\R/\R g,
  \]
  and either proves that $E_g$ is a division algebra or returns a
  nontrivial idempotent of $E_g$. Then full factorization in
  $\K(t)[x;\sigma]$ is decidable.
\end{proposition}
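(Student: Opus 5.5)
The decision procedure is a recursion on $\deg_x$. It first reduces to the irreducible-bound case using the computable parts of Proposition~\ref{prop:algstatus}. It then uses the assumed oracle on $E_g$ either to certify irreducibility or to split off a proper right factor. Given $F_{\rm in}$, normalize to the monic associate $f$ and apply Algorithm~\ref{alg:central}. By Proposition~\ref{prop:centralcorrect} this yields $f=g_1\cdots g_\ell$, where each $g_j$ is either $x$ or monic with irreducible bound $\pi_j\neq\Chi$.

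Since a product of complete factorizations of the $g_j$ is a complete factorization of $f$, it suffices to decide, for each irreducible-bound $g$, whether $g$ is irreducible, and otherwise to produce a proper factorization $g=qh$. We then recurse on $q$ and $h$. Every nonconstant right factor of $g$ has bound $\pi$, as in Fact~\ref{fact:degrees}, so $q$ and $h$ are again irreducible-bound inputs. Their $x$-degrees are strictly smaller, so the recursion terminates.

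The key structural step is the standard module dictionary. Because $\pi=ug$ is central, $\R\pi\subseteq\R g$, and $M_g=\R/\R g$ is a left $A_\pi$-module, finite-dimensional over $\F_\pi$. Right divisors $h$ of $g$ correspond bijectively to submodules $\R h/\R g$ of $M_g$. Hence $g$ is irreducible iff $M_g$ is a simple $A_\pi$-module.

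Since $A_\pi$ is central simple by Fact~\ref{fact:degrees}, hence semisimple, $M_g\cong S^k$ for the unique simple module $S$. Its endomorphism ring is $E_g\cong M_k(D^{\rm op})$. Thus $E_g$ is a division algebra iff $k=1$, i.e.\ iff $g$ is irreducible. This handles the first oracle outcome.

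Given a nontrivial idempotent $\epsilon\in E_g$, the image $\epsilon(M_g)$ is a proper nonzero submodule, and we must turn it into an explicit right factor. Represent $\epsilon$ by the image $\epsilon(\bar1)=\bar w$ with $w\in\R$, $\deg_x w<\deg_x g$. Then
\[
\epsilon(M_g)=\{\overline{rw}\}=(\R w+\R g)/\R g .
\]
So $h:=\gcrd(w,g)$ is a right divisor with $\R h/\R g=\epsilon(M_g)$, and $h$ is proper, with $0<\deg_x h<\deg_x g$, because $\epsilon\neq0,1$. The quotient $q$ is then obtained by right division, with exact verification as in Algorithm~\ref{alg:lift}.

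The main obstacle is this translation from $E_g$ (as the oracle returns it) to a concrete $w$. It requires that the oracle's representation of $E_g$ be explicit enough to evaluate an endomorphism on $\bar1$. The natural choice is the eigenring description: $E_g\cong\{w:\deg_x w<\deg_x g,\ gw\in\R g\}$, the standard eigenring of $g$ in Ore-ring factorization. This set is computable by linear algebra over $\K(\Tau)$ from right remainders of $gw$ modulo $g$.

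The remaining points are routine effectiveness checks. All gcd, division and linear-algebra steps are exact in the computable field $\K(t)$, and commutative factorization in $\K(\Tau)[\Chi]$ is effective, as in Proposition~\ref{prop:algstatus}(a). Combining these gives decidability.
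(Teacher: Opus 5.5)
Your proposal is correct and follows essentially the same route as the paper's sketch. You use the central decomposition, then $M_g\cong S^k$ so that $E_g\cong M_k(D^{\rm op})$ is a division algebra exactly when $g$ is irreducible, and then a nontrivial idempotent gives a proper submodule whose preimage is a left ideal $\R h$; your $h=\gcrd(w,g)$ with $\R w+\R g=\R h$ just makes that preimage explicit.

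One small slip: $q$ is a left factor, so the fact about right factors does not cover it. It still has bound $\pi$, because $\R/\R q\cong \R h/\R g$ is annihilated by $\pi$; alternatively, you can simply rerun Algorithm~\ref{alg:central} on $q$. Either way the recursion is unaffected.
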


\begin{proof}[Sketch]
  Apply Algorithm~\ref{alg:central} to reduce to irreducible-bound
  pieces. Let $g$ be one such piece, with bound $\pi$. Since
  $\pi\in \R g$, the containment $\R\pi\subseteq \R g$ gives
  $M_g=\R/\R g$ the structure of a left $A_\pi$-module.  Write
  $A_\pi\cong M_r(D)$, where $D$ is a central division
  $\F_\pi$-algebra of degree $d$.  If $S$ is a simple left
  $A_\pi$-module, then
  \[
    M_g\cong S^\ell, \qquad \ell=\frac{\deg_x g}{s d}, \qquad
    s=\deg_\Chi \pi,
  \]
  and therefore
  $E_g=\operatorname{End}_{A_\pi}(M_g)\cong M_\ell(D^{\rm op})$.  Thus
  $g$ is irreducible if and only if $\ell=1$, equivalently if and only
  if $E_g$ is a division algebra.

  If the assumed algorithm proves that $E_g$ is a division algebra,
  then $g$ is irreducible. Otherwise it returns a nontrivial
  idempotent $e\in E_g$. The image $eM_g$ is a proper nonzero
  $A_\pi$-submodule of $M_g$. Taking its inverse image under
  $\R\to M_g$ gives an intermediate left ideal
  $\R g \subsetneq I \subsetneq \R$.  Since $\R$ is a left principal
  ideal domain, write $I=\R h$. Then $g\in \R h$, so $g=q h$ for some
  $q\in \R$, and $h$ is a proper right factor of $g$. Right
  division verifies the factorization.  Recursing on the 
  cofactors gives a complete factorization.  \qed
\end{proof}

Proposition~\ref{prop:conditionaldec} isolates the difficulty in our
problem.  The unresolved object is not merely the central simple
algebra $A_\pi$ attached to the central prime $\pi$, but the
$A_\pi$-module $M_g=\R/\R g$, or equivalently its eigenring
$E_g=\operatorname{End}_{A_\pi}(M_g)$. A zero divisor in $A_\pi$ alone
does not necessarily split the particular polynomial $g$.

\section{SageMath implementation and computational examples}
\label{sec:implementation}

We have made a SageMath 10.8 prototype of Algorithm~\ref{alg:full} and
its subroutines \citep{sagemath}.  The code and sample runs are
available for experimentation.%
\footnote{\url{https://cs.uwaterloo.ca/~mwg/archive/quantum_plane_factorization}}
It implements arithmetic in $\K(t)[x;\sigma]$, right division,
right gcds, central-bound computation, central splitting, the two
irreducibility filters, good specialization, inert-prime reduction,
finite-field skew factorization using the Caruso--Le Borgne backend,
recombination, interpolation in $\Tau$, and final certification.
A small sparse-search fallback is used only for deliberately
constructed examples such as $(x^e-t)^2$ or $\Chi^e-\Tau$, and does
not replace the missing generic irreducible-bound algorithm of
Section~\ref{sec:algstatus}.

Table~\ref{tab:sage-timings} below summarizes one run of the example suite
on an Apple Mac Mini M2 Pro with 32GB RAM.  The examples exercise
central splitting, the degree-bound filter, modular/sparse
certification, and irreducible-central inputs.  The timings are
implementation data, not complexity claims.

\begin{table}[ht]
\centering
\small
\setlength{\tabcolsep}{4pt}
\caption{Timings for the SageMath example suite.  The column ``bound''
lists the degrees of the irreducible factors of the central bound
$\varphi\in\K(\Tau)[\Chi]$.  The column ``output'' lists the
$x$-degrees of the skew factors; $d\times r$ means $r$ factors of
degree $d$.  Here C means central
factorization only, C+DB means central factorization plus degree-bound
irreducibility tests, M/S means modular attempt plus sparse
certification, and CI/S means factorization of an irreducible central
polynomial by sparse certification.}
\label{tab:sage-timings}
\begingroup
\footnotesize
\setlength{\tabcolsep}{3pt}
\renewcommand{\arraystretch}{0.92}
\begin{adjustbox}{max width=\textwidth}
\begin{tabular}{@{}c l r r r l l S[table-format=3.3]@{}}
\toprule
ID & stage & {$m$} & {$\deg_x f$} & {$\deg_\Chi\varphi$}
   & bound & output & {time (s)} \\
\midrule
C1 & C    & 2 & 2  & 2  & $1\times 2$                 & $1\times 2$           & 0.012 \\
C2 & C    & 2 & 5  & 5  & $1\times 5$                 & $1\times 5$           & 0.069 \\
C3 & C    & 3 & 5  & 5  & $1\times 5$                 & $1\times 5$           & 0.396 \\
I1 & M/S  & 3 & 4  & 2  & $2$                         & $2\times 2$           & 0.850 \\
I2 & M/S  & 3 & 4  & 2  & $2$                         & $2\times 2$           & 0.403 \\
B1 & CI/S & 3 & 6  & 2  & $2$                         & $2\times 3$           & 0.031 \\
S1 & C    & 3 & 8  & 8  & $1\times 8$                 & $1\times 8$           & 2.145 \\
S2 & C+DB & 3 & 12 & 12 & $1\times 6,\;2,\;4$          & $1\times 6,\;2,\;4$   & 148.660 \\
S3 & C    & 5 & 6  & 6  & $1\times 6$                 & $1\times 6$           & 0.972 \\
S4 & M/S  & 3 & 8  & 4  & $4$                         & $4\times 2$           & 599.347 \\
S5 & CI/S & 3 & 15 & 5  & $5$                         & $5\times 3$           & 0.049 \\
\midrule
\multicolumn{7}{r}{Total wall time for the selected examples}
  & 752.934 \\
\bottomrule
\end{tabular}
\end{adjustbox}
\endgroup
\end{table}

All rows ended with a product check in $\K(t)[x;\sigma]$. 
The examples do not prove completeness for the irreducible-bound
stage, but they show that the algorithmic components in
Proposition~\ref{prop:algstatus} assemble into a working 
implementation.  

\section{Decidability over $\QQbar(t)$ in the exact algebraic model}
\label{sec:algclosure}

We next consider the same factorization problem after enlarging the
field of constants from $\K$ to the algebraic closure $\QQbar$.  The
analogous situation over $\CC$ will be discussed separately in
Subsection~\ref{subsec:bssfact}. In this setting the algebra
associated with an irreducible bound always splits, so the remaining
task is to find the factors effectively. Set
\tightdisplay{
  \Ralg=\QQbar(t)[x;\sigma], \qquad \sigma(t)=\omega t,
}
with centre
\tightdisplay{
  \Calg=\QQbar(\Tau)[\Chi], \qquad \Tau=t^m,\ \Chi=x^m.
}

By the \emph{standard exact algebraic model over $\QQbar$} we mean the
usual symbolic computation model in which every input is given over
some explicit finite extension $E_0/\QQ\subset \QQbar$, represented by
a primitive element, and computations are then carried out in
$E_0$ and in further finite extensions obtained by adjoining roots as
needed.  In this sense $\QQbar$ is handled as the directed union of
the finite extensions of $\QQ$. See \citet{Ste10} for an effective
algorithmic approach to computing in algebraic closures.

\subsection{Irreducible-bound pieces over algebraically closed
  constants}

Over an algebraically closed field of constants the simple algebra
attached to an irreducible bound becomes just a matrix algebra over
its centre.  We say that such an algebra \emph{splits} over a field
$\F$, meaning that it is isomorphic to a full matrix algebra $M_r(\F)$
for some $r$.  After moving to the algebraic closure, the difficulty
coming from the associated algebra disappears.  The irreducible
factors then have the expected degree $s$ in Fact~\ref{fact:degrees}.

\begin{theorem}
  \label{thm:qbarsplit}
  Let $g\in \Ralg$ be monic with irreducible bound $\pi\in \Calg$,
  $\pi\neq \Chi$, and write $s=\deg_\Chi(\pi)$, $\Falg=\Calg/(\pi)$,
  and $A=\Ralg/\Ralg\pi$.  Then $A$ is a split central simple algebra
  over $\Falg$. Thus, every irreducible factor of $g$ in $\Ralg$ has
  degree $s$ in $x$. In particular, $g$ is irreducible if and
  only if $\deg_x(g)=s$.
\end{theorem}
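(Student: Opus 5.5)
The plan is to reduce the statement to the vanishing of the Brauer group of $\Falg$, and then read off the degree statement from Fact~\ref{fact:degrees}, applied verbatim over $\QQbar(t)$ in place of $\K(t)$.

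\emph{Step 1: $A$ is central simple of degree $m$.} The argument of Fact~\ref{fact:degrees} carries over unchanged, since it only uses that $\sigma$ has order $m$ with fixed field $\QQbar(\Tau)$, and that $\pi\neq\Chi$ is central and irreducible. Concretely, localize as in Section~\ref{subsec:norms}. Then $\Falg=\Calg/(\pi)$ is a field, and the image $\bar\Chi$ of $\Chi$ is nonzero in it. The quotient $A$ is the cyclic algebra
\[
  A\cong(\Falg(t)/\Falg,\sigma,\bar\Chi),\qquad \Falg(t)=\Falg[t]/(t^m-\Tau).
\]
This is central simple over $\Falg$ of degree $m$. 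If $t^m-\Tau$ is not irreducible over $\Falg$, one uses instead the crossed product with the corresponding étale algebra; it is still central simple of degree $m$.

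\emph{Step 2: $A$ splits.} We have $\Falg$ finite over $\QQbar(\Tau)$, since it is generated by a root of $\pi$ of degree $s$. Hence $\Falg$ is a function field of transcendence degree one over the algebraically closed field $\QQbar$. By Tsen's theorem, $\Falg$ is $C_1$, so every central division algebra over it is trivial, i.e.\ $\Br(\Falg)=0$. Therefore $A\cong M_m(\Falg)$, the index is $d=1$, and $r=m$.

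\emph{Step 3: degrees.} With $d=1$, Fact~\ref{fact:degrees} gives that every irreducible factor of $g$ has $x$-degree $sd=s$. For the final claim, note that $g$ is nonconstant and each of its irreducible factors has degree exactly $s$. So $g$ is irreducible iff it has exactly one such factor, i.e.\ iff $\deg_x g=s$.

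The main obstacle is justifying Step 1 carefully over the new field, in particular the identification of the index with the factor degree. The identification of the index is itself the content of Fact~\ref{fact:degrees}, which we may cite. Even so, I would still sketch it through the module $M_g=\Ralg/\Ralg g\cong S^\ell$ as in the proof of Proposition~\ref{prop:conditionaldec}: a simple $A$-module $S$ has $\Falg$-dimension $m$ when $A\cong M_m(\Falg)$, and $\dim_{\QQbar(t)}$ of the module $\Ralg/\Ralg h$ is $\deg_x h$. Matching these dimensions gives that irreducible $h$ (with $M_h$ simple) has degree $\dim_{\QQbar(t)}S=s$.
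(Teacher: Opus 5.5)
Your proposal is correct and follows essentially the same route as the paper. Tsen's theorem gives $\Br(\Falg)=0$, so $A\cong M_m(\Falg)$ with index $d=1$, and Fact~\ref{fact:degrees} then gives irreducible factor degree $sd=s$. Your extra details make explicit what the paper takes from Fact~\ref{fact:degrees} over $\QQbar(t)$: the generalized cyclic-algebra description of $A$ when $\Falg[t]/(t^m-\Tau)$ is only étale, the dimension count $\dim_{\QQbar(t)}S=s$, and the short argument for the final ``if and only if''.
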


\begin{proof}
  The field $\Falg$ is the function field of a curve over the
  algebraically closed field $\QQbar$. By Tsen's theorem, $\Falg$ is a
  $C_1$-field, and since $C_1$-fields have trivial Brauer group, we
  get $\Br(\Falg)=0$.  See \citep[Theorem~6.2.8 and
  Proposition~6.2.3]{GilSza06}.  Since $A$ is a finite-dimensional
  central simple $\Falg$-algebra, it follows that $A$ is split, i.e.,
  $A\cong M_m(\Falg)$.  Thus the index in Fact~\ref{fact:degrees} is
  $d=1$, and every irreducible factor degree is~$sd=s$. \qed
\end{proof}

\subsection{A complete factorization algorithm over $\QQbar(t)$}

The splitting theorem above immediately turns the irreducible-bound case
into a (very inefficient) search problem.

\begin{theorem}
  \label{thm:qbardec}
  In the exact algebraic model over $\QQbar$, factorization in
  $\QQbar(t)[x;\sigma]$ is decidable. More precisely, there is an
  algorithm that returns a complete factorization of any monic
  input in $\QQbar(t)[x;\sigma]$.
\end{theorem}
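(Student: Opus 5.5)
The plan has two stages: a central decomposition over a finite extension, followed by an exact algebraic search for factors of the known degree $s$, using Theorem~\ref{thm:qbarsplit}.

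\emph{Reduction to irreducible-bound pieces.} The input $f$ is monic with coefficients in $E_0(t)$ for some number field $E_0\supseteq\QQ(\omega)$, given by a primitive element. Run Algorithm~\ref{alg:central} over $E_0$; the required operations (bound computation by linear algebra, right gcds, right division) are all exact over $E_0(t)$. The only new ingredient is factoring the bound $\varphi\in\Calg$ over $\QQbar(\Tau)$, which is absolute factorization of a bivariate polynomial over $E_0$. This is effective: factor $\varphi$ over $E_0$, then compute a splitting or absolute-factorization extension $E_1/E_0$ in which every factor is absolutely irreducible (for instance by Trager/Duval-type absolute factorization), and replace $E_0$ by $E_1$. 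The central stage then yields $f=g_1\cdots g_k$, where each $g_j$ is $x$ or has an absolutely irreducible bound $\pi_j\neq\Chi$ with $s_j=\deg_\Chi\pi_j$. By Proposition~\ref{prop:centralcorrect}, each stage is correct and terminates.

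\emph{Irreducible-bound piece.} Fix $g$ with bound $\pi$ and put $s=\deg_\Chi\pi$. By Theorem~\ref{thm:qbarsplit}, $g$ is irreducible if and only if $\deg_x g=s$. Otherwise $g$ has a monic right divisor $h$ of degree exactly $s$, and we must find one. Write $h=x^s+\sum_{i<s}c_i x^i$ with unknown $c_i\in\QQbar(t)$. The plan is to bound $t$-degrees of numerators and denominators of the $c_i$, then turn the condition $\rrem(g,h)=0$ into a polynomial system over $E_1$ in finitely many unknown constant coefficients. By Theorem~\ref{thm:qbarsplit}, this system has a solution over $\QQbar$, and an effective Nullstellensatz or Gröbner basis computation, followed by triangular decomposition and root adjunction, produces an explicit solution in a finite extension $E_2$.

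\emph{Main obstacle: the degree bound.} I expect a priori degree bounds on the $c_i$ to be the hardest step. My first approach is to avoid them by incremental search. Enumerate $D=0,1,2,\dots$; for each $D$, parametrize numerators and a common denominator of $t$-degree at most $D$, and decide solvability of the resulting system over $\QQbar$. This is decidable via Gröbner bases, since the emptiness of a variety over $\QQbar$ is decided over $E_1$. Because some degree-$s$ right divisor exists with coefficients in $\QQbar(t)$, the search terminates at some finite $D$. When the system is solvable, extract an actual point by adjoining roots, as above. Certify $h$ by exact right division in $E_2(t)[x;\sigma]$, then recurse on the cofactor $q$ with $g=qh$. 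The cofactor has the same irreducible bound $\pi$, since any central multiple of $g$ is one of $q$'s appropriate multiple, or more simply since $\R\pi\subseteq \R g$ and $q$ is a left factor. Its degree is smaller by $s$, so the recursion terminates after $\deg_x g/s$ steps.

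Ensuring that the bound of $q$ is still $\pi$, so that Theorem~\ref{thm:qbarsplit} applies again, needs a short check. If that check fails, rerun Algorithm~\ref{alg:central} on $q$ instead.
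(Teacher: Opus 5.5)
Your proposal is correct. The central-decomposition half matches the paper's proof, and you additionally make the paper's bare appeal to ``factor $\varphi$ completely in $\Calg$'' concrete via absolute bivariate factorization.

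The genuine difference is how you find the degree-$s$ right factor once Theorem~\ref{thm:qbarsplit} guarantees one exists. The paper dovetails over all finite extensions $\E\supseteq E_1$, given by primitive elements, and over all tuples of rational functions in $\E(t)$, testing each candidate by $\rrem(g,h)$. Termination uses only the existence of a factor and the effective countability of candidates, so no elimination theory is needed. You instead stratify by a degree bound $D$ on numerators and a common denominator. For each $D$ you turn $\rrem(g,h)=0$ into a polynomial system over $E_1$, decide its consistency over $\QQbar$ by Gr\"obner bases, and extract an algebraic point by root adjunction. Each stage $D$ is then a decidable finite problem: it either rules out a factor with that degree profile or constructs one, so the algebraic constants are produced rather than guessed. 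The price is heavier machinery, and two details should be stated explicitly. One is the open condition that the denominator $q$ is nonzero, handled for example by a Rabinowitsch variable or by normalizing one coefficient of $q$ to $1$ and trying each position. The other is clearing the shifted denominators $q(\omega^j t)$ that arise during right division. Neither argument has an a priori bound on $D$, so both need an unbounded outer search justified by Tsen's theorem.

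The ``short check'' you defer is immediate. If $\pi=ug$ and $g=qh$, then $\pi h=h\pi=(hu)qh$, so $\pi=(hu)q$ by right cancellation. Hence the bound of $q$ divides, and therefore equals, $\pi$, and the fallback to rerunning Algorithm~\ref{alg:central} is never needed. Likewise $\Ralg\pi\subseteq\Ralg h$ and $\deg_x h=s$, so $h$ is irreducible by Theorem~\ref{thm:qbarsplit}. Recursing only on $q$, rather than on both pieces as the paper does, therefore suffices.
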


\begin{proof}
  Let $f\in \QQbar(t)[x;\sigma]$ be monic. Choose a finite extension
  $E_0/\QQ$ inside $\QQbar$ that contains $\K$ and all coefficients of
  $f$. Let $\varphi$ be the monic bound of $f$. In the algebraic model
  over $\QQbar$, factor $\varphi$ completely in
  $\Calg=\QQbar(\Tau)[\Chi]$. Since $\varphi$ has finitely many
  coefficients, the coefficients of its central irreducible factors
  lie in some finite extension $E_1/\QQ$ obtained from $E_0$ by
  adjoining finitely many algebraic constants. Replacing $E_0$ by
  $E_1$, apply the same central decomposition as
  Algorithm~\ref{alg:central} over $E_1(t)[x;\sigma]$ to split $f$
  into copies of $x$ and factors whose bounds are irreducible in
  $\Calg$. It therefore suffices to treat one such irreducible-bound
  piece $g$ with bound $\pi$.
  
  Set $s=\deg_\Chi(\pi)$. If $\deg_x(g)=s$, then $g$ is irreducible by
  Theorem~\ref{thm:qbarsplit}. Assume therefore that $\deg_x(g)>s$.
  Then Theorem~\ref{thm:qbarsplit} implies that $g$ has a nontrivial
  right factor of $x$-degree $s$.
  
  We now dovetail over finite extensions $\E/\QQ$ containing $E_1$,
  represented by primitive elements, and over tuples in $\E(t)$ of
  rational functions that can occur as the lower coefficients of a
  monic degree-$s$ skew polynomial
  \[
    h=x^s+h_{s-1}x^{s-1}+\cdots+h_0.
  \]
  For each candidate $h$, compute the right remainder
  $\rrem(g,h)$. Because an actual right factor of degree $s$ exists
  and uses only finitely many algebraic coefficients, it lies in
  $\E(t)[x;\sigma]$ for some finite extension $\E/\QQ$ containing
  $E_1$, so enumeration eventually finds it. At that stage the right
  remainder is zero and we have found a factor of~$g$.

  Compute the cofactor by right division and recurse on both
  pieces. Since the $x$-degree drops at each recursive step, the
  process terminates and returns a complete factorization of $f$. \qed
\end{proof}

Theorem~\ref{thm:qbardec} is a decidability result, not an efficiency
theorem. After passing to the algebraic closure of the constants, the
associated algebra splits. Thus factors of the expected degree exist,
though the algorithm may still have to search through many candidates
to find them.

\subsection{A brief comment on factoring in the BSS model over $\CC$}
\label{subsec:bssfact}

The preceding discussion can also be viewed in the Blum--Shub--Smale
model over $\CC$ \citep{BluCuc98}.  Replacing $\QQbar$ by $\CC$ does
not change the algebraic splitting statement;
Theorem~\ref{thm:qbarsplit} still shows that, for an irreducible-bound
input, reducibility is detected by comparing the degree in $x$ of the
input with the $\Chi$-degree of its bound.  Thus, once such a piece
and its bound have been computed, this particular reducibility test is
available in the bare BSS model.

What is not built into the bare BSS model is a uniform operation for
choosing roots of auxiliary polynomials.  This is the only further
point we need here.  The constructive proof of
Theorem~\ref{thm:qbardec} uses algebraic choices when it passes to
finite extensions of the constant field.  In a BSS model augmented
with such root choices, the same algebraic argument applies over
$\CC$.  Without that augmentation, we make no separate BSS
decidability claim.  See \citet{Bra05} for a discussion of this
root-choice issue.

\section*{Acknowledgements}
The author would like to thank the anonymous referees for their input.  The author acknowledges support of the Natural Sciences and Engineering Research Council (NSERC), Canada.

\patchcmd{\thebibliography}
  {\list}
  {\vspace{-3.25em}\list}
  {}
  {}
  
\begingroup
\renewcommand{\bibfont}{\small}
\setlength{\bibsep}{0pt}
\bibliographystyle{plainnat}

\renewcommand{\doi}[1]{%
  \href{https://doi.org/#1}{doi:\footnotesize\nolinkurl{#1}}%
}

\bibliography{qdilfact}

@article{Ore33,
  author  = {Ore, Oystein},
  title   = {Theory of Non-Commutative Polynomials},
  journal = {Annals of Mathematics},
  volume  = {34},
  number  = {3},
  pages   = {480--508},
  year    = {1933}
}

@article{CouPri06,
  author  = {Coulibaly, Romain and Price, Kenneth L.},
  title   = {Factorization in Quantum Planes},
  journal = {Missouri J. Mathematical Sciences},
  volume  = {18},
  number  = {3},
  pages   = {197--205},
  year    = {2006},
  doi     = {10.35834/2006/1803197}
}

@article{Gie98,
  author  = {Giesbrecht, Mark},
  title   = {Factoring in Skew-Polynomial Rings over Finite Fields},
  journal = {J. Symb. Comp.},
  volume  = {26},
  number  = {4},
  pages   = {463--486},
  year    = {1998},
  doi     = {10.1006/jsco.1998.0224}
}

@inproceedings{GieZha03,
  author    = {Giesbrecht, Mark and Zhang, Yang},
  title     = {Factoring and Decomposing {O}re Polynomials over {$\mathbb{F}_q(t)$}},
  booktitle = {Proceedings of the 2003 International Symposium on Symbolic and Algebraic Computation (ISSAC 2003)},
  pages     = {127--134},
  publisher = {ACM},
  address   = {New York},
  year      = {2003},
  doi       = {10.1145/860854.860888}
}

@article{CarLeb17,
  author  = {Caruso, Xavier and Le Borgne, J{\'e}r{\'e}my},
  title   = {A New Faster Algorithm for Factoring Skew Polynomials over Finite Fields},
  journal = {J. Symb. Comp.},
  volume  = {79},
  number  = {2},
  pages   = {411--443},
  year    = {2017},
  doi     = {10.1016/j.jsc.2016.02.016}
}

@article{GomLob19,
  author  = {{G{\'o}mez-Torrecillas}, Jos{\'e} and Lobillo, F. J. and Navarro, Gabriel},
  title   = {Computing the Bound of an {O}re Polynomial. {Applications} to Factorization},
  journal = {J. Symb. Comp.},
  volume  = {92},
  pages   = {269--297},
  year    = {2019},
  doi     = {10.1016/j.jsc.2018.04.018}
}

@inproceedings{Han07,
  author    = {Hanke, Timo},
  title     = {The Isomorphism Problem for Cyclic Algebras and an Application},
  booktitle = {Proceedings of the 2007 International Symposium on Symbolic and Algebraic Computation (ISSAC 2007)},
  pages     = {181--186},
  publisher = {ACM},
  address   = {New York},
  year      = {2007},
  doi       = {10.1145/1277548.1277574}
}

@article{IvaRon12,
  author  = {Ivanyos, G{\'a}bor and R{\'o}nyai, Lajos and Schicho, Josef},
  title   = {Splitting Full Matrix Algebras over Algebraic Number Fields},
  journal = {J. Algebra},
  volume  = {354},
  number  = {1},
  pages   = {211--223},
  year    = {2012},
  doi     = {10.1016/j.jalgebra.2012.01.008}
}

@article{PumTho22,
  author  = {Pumpl{\"u}n, Susanne and Thompson, Daniel},
  title   = {The Norm of a Skew Polynomial},
  journal = {Algebras and Representation Theory},
  volume  = {25},
  pages   = {869--887},
  year    = {2022},
  doi     = {10.1007/s10468-021-10051-z}
}

@book{BluCuc98,
  author    = {Blum, Lenore and Cucker, Felipe and Shub, Michael and Smale, Steve},
  title     = {Complexity and Real Computation},
  publisher = {Springer},
  address   = {New York},
  year      = {1998},
  doi       = {10.1007/978-1-4612-0701-6}
}

@book{GilSza06,
  author    = {Gille, Philippe and Szamuely, Tam{\'a}s},
  title     = {Central Simple Algebras and Galois Cohomology},
  publisher = {Cambridge University Press},
  address   = {Cambridge},
  year      = {2006}
}

@inproceedings{Bra05,
  author    = {Braverman, Mark},
  title     = {On the Complexity of Real Functions},
  booktitle = {Proceedings of the 46th Annual {IEEE} Symposium on Foundations of Computer Science},
  pages     = {155--164},
  publisher = {IEEE Computer Society},
  address   = {Los Alamitos, CA},
  year      = {2005},
  doi       = {10.1109/SFCS.2005.58}
}

@Book{Jac43,
  author = "N. Jacobson",
  title = "The Theory of Rings",
  publisher = "American Math. Soc.",
  address = "New York",
  year = 1943
}

@preamble{"\newcommand{\Gathen}{\relax}"}

@string{Gathen = "\Gathen{von zur Gathen}, J."}

@book{GatGer13,
title = "Modern Computer Algebra",
edition = "3",
author = Gathen # " and J. Gerhard",
publisher = "Cambridge University Press",
year = 2013,
place = "Cambridge, UK"
}

@article{Ste10,
title = {Computing with algebraically closed fields},
volume = {45},
doi = {10.1016/j.jsc.2009.09.005},
number = {3},
journal = {J. Symb. Comp.},
author = {Steel, Allan K.},
year = {2010},
pages = {342--372},
}

@inproceedings{Lev06,
  author = {Levandovskyy, Viktor},
  title = {Plural, a Non-commutative Extension of Singular:
           Past, Present and Future},
  booktitle = {Mathematical Software -- ICMS 2006},
  series = {Lecture Notes in Computer Science},
  volume = {4151},
  pages = {144--157},
  publisher = {Springer},
  year = {2006},
  doi = {10.1007/11832225_13}
}

@article{LevHei18,
  author = {Levandovskyy, Viktor and Heinle, Albert},
  title = {A Factorization Algorithm for {$G$}-Algebras and Its Applications},
  journal = {J. Symb. Comp.},
  volume = 85,
  pages = {188--205},
  year = 2018,
  doi = {10.1016/j.jsc.2017.06.005}
}

@manual{sagemath,
  author = {{The Sage Developers}},
  title  = {{S}ageMath, the {S}age {M}athematics {S}oftware {S}ystem},
  url    = {https://www.sagemath.org},
  version = {10.8},
  year   = {2025}
}

@article{BroPum21,
  author  = {Brown, C. and Pumpl{\"u}n, S.},
  title   = {Irreducible skew polynomials over domains},
  journal = {Analele {\c S}tiin{\c t}ifice ale Universit{\u a}{\c t}ii Ovidius Constan{\c t}a. Seria Matematic{\u a}},
  volume  = {29},
  number  = {3},
  pages   = {75--89},
  year    = {2021},
  doi     = {10.2478/auom-2021-0035}
}

@article{BelHei17,
  author  = {Bell, Jason P. and Heinle, Albert and Levandovskyy, Viktor},
  title   = {On noncommutative finite factorization domains},
  journal = {Transactions of the American Mathematical Society},
  volume  = {369},
  number  = {4},
  pages   = {2675--2695},
  year    = {2017},
  doi     = {10.1090/tran/6727}
}
\endgroup

\end{document}